\algnewcommand{\LineComment}[1]{\State \(\triangleright\) #1}
\newtheorem{theorem}{Theorem}[section]
\theoremstyle{definition}
\newtheorem{definition}[theorem]{Definition}
\newtheorem{remark}[theorem]{Remark}
\newcommand{\F}{{\mathbb F}}
\newcommand{\BigO}[1]{\ensuremath{\operatorname{O}\bigl(#1\bigr)}}
\newcommand{\BigOmega}[1]{\ensuremath{\operatorname{\Omega}\bigl(#1\bigr)}}
\newcommand{\BigTheta}[1]{\ensuremath{\operatorname{\Theta}\bigl(#1\bigr)}}
\begin{document}

\title{On Chudnovsky-Based Arithmetic Algorithms in Finite Fields}

\author{Kevin Atighehchi} 
\address{Kevin Atighehchi, Aix-Marseille Universit{\'e}, Laboratoire d'Informatique Fondamentale de Marseille,
case 901, F13288 Marseille cedex 9 France}
\email{Kevin.Atighehchi@univ-amu.fr}

\author{St\'ephane Ballet} 
\address{St\'ephane Ballet, Aix-Marseille Universit{\'e}, Institut de Math\'{e}matiques
de Marseille case 907, F13288 Marseille cedex 9 France}
\email{Stephane.Ballet@univ-amu.fr}

\author{Alexis Bonnecaze}  
\address{Alexis Bonnecaze, Aix-Marseille Universit{\'e}, Institut de Math\'{e}matiques
de Marseille case 907, F13288 Marseille cedex 9 France}
\email{Alexis.Bonnecaze@univ-amu.fr}

\author{Robert Rolland}
\address{Robert Rolland, Aix-Marseille Universit{\'e}, Institut de Math\'{e}matiques
de Marseille case 907, F13288 Marseille cedex 9 France}
\email{Robert.Rolland@univ-amu.fr}

\begin{abstract}
Thanks to a new construction of the so-called Chudnovsky-Chudnovsky multiplication algorithm, 
we design efficient algorithms for both the exponentiation and the multiplication in finite fields.
They are tailored to hardware implementation and they allow computations to be parallelized 
while maintaining a low number of bilinear multiplications. 
We give an example with the finite field $\F_{16^{13}}$.
\end{abstract}

\maketitle

\section{Introduction}

\subsection{Context}
Multiplication in finite fields is a fundamental operation in arithmetic and finding  efficient 
multiplication methods remains a topical issue.
Let $q$ be a prime power, ${\mathbb F}_q$ the finite field with $q$ elements and ${\mathbb F}_{q^n}$ 
the degree $n$ extension of ${\mathbb F}_q$.
If $\mathcal{B}=\{e_1,...,e_n\}$ is a basis of ${\mathbb F}_{q^n}$ over $\F_q$ then for 
$x=\sum_{i=1}^{n}x_ie_i$ and $y=\sum_{i=1}^{n}y_ie_i$, 
we have the product 
\begin{equation}\label{calculdirect}
z=xy=\sum_{h=1}^{n}z_he_h=\sum_{h=1}^{n}\biggr( \sum_{i,j=1}^{n}t_{ijh}x_ix_j\biggl)e_h,
\end{equation}
 where $$e_ie_j=\sum_{h=1}^{n}t_{ijh}e_h,$$ 
$t_{ijh}\in \F_q$ being some constants.
The complexity of a multiplication algorithm in ${\mathbb F}_{q^n}$ depends on the number of multiplications and additions 
in  $\F_q$. There exist two types of multiplications in $\F_q$: the scalar multiplication 
and the  bilinear  multiplication. The scalar multiplication is the multiplication 
by a constant (in $\F_{q}$) which does not depend on the elements of $\F_{q^n}$ 
that are multiplied. The bilinear multiplication is a multiplication of elements 
that depend on the elements of $\F_{q^n}$ that are multiplied. 
The bilinear complexity is independent 
of the  chosen representation of the finite field.
For example, the direct calculation of $z=(z_1,...,z_n)$ using (\ref{calculdirect}) requires $n^2$ non-scalar multiplication $x_ix_j$, 
$n^3$  scalar multiplications and $n^3-n$ additions.

More precisely, the multiplication of two elements of $\F_{q^n}$ is 
an $\F_q$-bilinear application from $\F_{q^n} \times \F_{q^n}$ onto $\F_{q^n}$.
Then it can be considered as an $\F_q$-linear application from the tensor product 
${\F_{q^n} \otimes_{\F_q} \F_{q^n}}$
onto $\F_{q^n}$. Consequently, it can also be  considered as an element 
$T$ of ${{\F_{q^n}}^\star \otimes_{\F_q} {\F_{q^n}}^\star \otimes_{\F_q} \F_{q^n}}$ where $\star$ denotes the dual.
When $T$ is written
\begin{equation}\label{tensor}
T=\sum_{i=1}^{r} x_i^\star\otimes y_i^\star\otimes c_i,
\end{equation}
where the $r$ elements $x_i^\star$ as well as the $r$ elements $y_i^\star$
are in the dual ${\F_{q^n}}^\star$ of $\F_{q^n}$ while the $r$ elements $c_i$ are in $\F_{q^n}$,
the following holds for any ${x,y \in \F_{q^n}}$:
$$
x\cdot y=\sum_{i=1}^r x_i^\star(x) y_i^\star(y) c_i.
$$
The decomposition (\ref{tensor}) is not unique. 

\begin{definition}
Every expression
$$
x\cdot y=\sum_{i=1}^r x_i^\star(x) y_i^\star(y) c_i
$$
defines a bilinear multiplication algorithm ${\mathcal U}$ of bilinear complexity $\mu({\mathcal U})=r$.
\end{definition}

\begin{definition}
The minimal number of summands in a decomposition of the tensor $T$ of the multiplication
is called the bilinear complexity of the multiplication and is denoted by
$\mu_{q}(n)$:
$$
\mu_{q}(n)= \min_{{\mathcal U}} \mu({\mathcal U})
$$ where ${\mathcal U}$ is running over all bilinear multiplication algorithms in $\F_{q^n}$ over $\F_q$.
\end{definition}

The bilinear complexity of the multiplication in ${\mathbb F}_{q^n}$
over ${\mathbb F}_q$ has been widely studied. 
In particular, it was proved in \cite{ball1} that it is uniformly linear with respect to the degree 
$n$ of the extension. It follows from a clever algorithm performing multiplication: 
the so-called multiplication algorithm of Chudnovsky and Chudnovsky. The original Chudnovsky-Chudnovsky 
algorithm was introduced in 1987 by D.V. and G.V. Chudnovsky \cite{chch} 
and is based on the interpolation on some algebraic curves. From now on, we will denote this algorithm by CCMA.

There is benefit having a low bilinear complexity  when considering hardware 
 implementations mainly because it reduces the  number of gates in the circuit. In fact, in the so-called non-scalar model
  (denoted NS),
 only the bilinear complexity is taken into account and it is
 assumed that all scalar operations are free. Indeed, this model does not reflect the reality and since 
the bilinear complexity is not the whole complexity of the algorithm,  the complexity 
of the linear part of the algorithm should also be taken into account. In this paper, we consider two other models. The
model S1, which takes into account  the number of multiplications without distinguishing between the bilinear ones 
and the scalar ones. The model S2 which takes into account all operations (multiplications and additions) in $\F_q$.

Notice that so far,
 practical implementations of multiplication
 algorithms over finite fields have failed to 
 simultaneously optimize the number of scalar multiplications, additions and bilinear multiplications.

 Regarding exponentiation algorithms, the use of a normal basis is of interest  because the $q^{th}$ power 
 of an element is just a cyclic shift of its coordinates. A remaining question is, how to implement multiplication efficiently 
 in order to have simultaneously fast multiplication and fast exponentiation. 
 In 2000, Gao et al. \cite{GvzGPS2000} show that fast multiplication methods can be
adapted to normal bases constructed with Gauss periods. They show that if $\F_{q^n}$ 
is represented by a normal basis over $\F_q$ generated by a Gauss period of type $(n,k)$, 
the multiplication in $\F_{q^n}$ can be computed with $\BigO{nk\log nk \log\log nk}$ 
and the exponentiation with $\BigO{n^2k\log k \log\log nk}$ operations in $\F_q$ ($q$ being small). This result 
is valuable when $k$ is bounded. However, in the general case $k$ is upper-bounded by $\BigO{n^3\log^2 nq }$.

In 2009, Couveignes and Lercier construct in \cite[Theorem 4]{Cole} 
two families of basis (called elliptic and normal elliptic) for finite field extensions 
from which they obtain a model $\Xi$ defined as follows.
To every couple $(q,n)$, they associate a model, $\Xi(q,n)$, of the degree $n$ extension of $\F_q$ such that the following holds: 

\vspace{.5em}

There is a positive constant $K$ such that the following are true: 

\vspace{.5em}

$\bullet$ Elements in $\F_{q^n}$ are represented by vectors for which the number of components in $\F_q$ is upper bounded by  
$$Kn(\log n)^2\log (\log n)^2.$$

\vspace{.5em}

$\bullet$  There exists an algorithm that multiplies two elements at the expense of $$Kn(\log n)^4|\log (\log n)|^3$$ 
multiplications in $\F_q$.

\vspace{.5em}
 
$\bullet$ Exponentiation by $q$ consists in a circular shift of the coordinates.

\vspace{.5em}

Therefore, for each extension of finite field, they show that there exists a model which allows both fast 
multiplication and fast application of the Frobenius automorphism.
Their model has the advantage of existing for all extensions. 
However, the bilinear complexity of their algorithm is not competitive compared with the best known 
methods, as pointed out in \cite[Section 4.3.4]{Cole}. Indeed, it is clear that such a model requires at least 
$$Kn(\log n)^2(\log (\log n))^2$$  bilinear multiplications.

Note that throughout the paper, efficiency of  algorithms is described in terms of parallel time (depth of the 
circuit, in number of multiplications), number of processors (width) 
and total number of multiplications (size). We have width $\leq$ size $\leq$ depth$.$width. 
\subsection{New results}
We propose another model with the following characteristics:

- Our model is based on CCMA method, thus the multiplication algorithm has a bilinear complexity in $O(n)$, which is optimal. 

- Our model is tailored to parallel computation.  Hence, the computation time used to perform a multiplication or 
any exponentiation can easily be reduced with an adequate number of processors.  
Since our method has a bilinear complexity of multiplication in $O(n)$, it can be parallelized to obtain a constant 
time complexity using $\BigO{n}$ processors. The previous aforementioned works (\cite{GvzGPS2000} and \cite{Cole}) 
do not give any parallel algorithm (such an algorithm is more difficult to conceive than a serial one). 

- Exponentiation by $q$ is a circular shift of the coordinates and can be considered free. Thus, efficient parallelization 
can be done 
when doing exponentiation.

 - The scalar complexity of our exponentiation algorithm is reduced compare to a basic exponentiation using CCMA algorithm 
thanks to a suitable basis representation of the Riemann-Roch space ${\mathcal L}(2D)$ in the second evaluation map.  
More precisely, the normal basis
representation of the residue class field is carried in the associated Riemann-roch space ${\mathcal L}
(D)$, and the exponentiation by $q$ consists in a circular 
shift of the $n$ first coordinates of the vectors lying in the Riemann-Roch space ${\mathcal L}(2D)$.

 - Our model uses Coppersmith-Winograd \cite{CW1987} method (denoted CW) or any variants 
 thereof to improve matrix products  and to diminish the number of scalar operations. 
 This improvement is particularly efficient for exponentiation.

In term of complexity, we can state the following results, depending on the chosen model
(NS, S1 and S2).

\begin{theorem}\label{LNS}
In the non-scalar model NS,  there exist multiplication and exponentiation algorithms  in $\F_{q^n}$  such that: 
\begin{itemize}
\item[-] Multiplication is done in parallel time in $\BigO{1}$ multiplications in $\F_q$ with $\BigO{n}$ processors, 
for a total in $\BigO{n}$ multiplications.
\item[-]  Exponentiation is done in parallel time in $\BigO{\log n}$ multiplications in $\F_q$ with $\BigO{n^2/\log^2n}$ processors, 
for a total in $\BigO{n^2/\log n}$ multiplications.
\end{itemize}
\end{theorem}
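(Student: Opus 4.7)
The plan rests on two ingredients supplied by the preceding material. First, CCMA from \cite{ball1} expresses multiplication in $\F_{q^n}$ as a tensor of rank $r = O(n)$ whose $r$ bilinear products are pointwise evaluations at distinct rational places and hence mutually independent. Second, the representation chosen by the paper carries a normal basis, so the Frobenius $\sigma\colon x \mapsto x^q$ is a cyclic shift of coordinates and costs nothing in the NS model. I would treat multiplication first, then exponentiation.

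For the multiplication bound, the $r = O(n)$ coordinate-wise products $x_i^\star(x)\cdot y_i^\star(y)$ appearing in $x\cdot y = \sum_{i=1}^r x_i^\star(x)\,y_i^\star(y)\,c_i$ act on disjoint pairs of scalars and so can be scheduled simultaneously on $O(n)$ processors. This yields parallel depth $O(1)$ and total size $O(n)$ bilinear multiplications, the linear pre- and post-maps being free in the NS model.

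For exponentiation by an arbitrary $k<q^n$, I would combine a Brauer--Yao window scheme with the free Frobenius. Choose $\ell = \Theta(\log n)$ so that $q^\ell = \Theta(n/\log n)$, write $k = \sum_{j=0}^{m-1} K_j (q^\ell)^j$ with $m = \lceil n/\ell\rceil = O(n/\log n)$, and group indices by digit value via $S_K := \{j : K_j = K\}$. Then
$$
x^k \;=\; \prod_{K=0}^{q^\ell-1}\; \prod_{j\in S_K}\sigma^{j\ell}(x^K).
$$
The computation proceeds in three balanced phases: (a) precompute $x^0,\dots,x^{q^\ell-1}$ by a doubling tree, using $O(n/\log n)$ multiplications at depth $O(\log n)$; (b) apply the free Frobenius shifts and, in parallel across $K$, form the inner products over each $S_K$ by balanced binary trees, for a total of $\sum_K(|S_K|-1)\le m = O(n/\log n)$ multiplications at depth $O(\log n)$; (c) combine the at most $q^\ell$ outer terms by one further balanced tree, again $O(n/\log n)$ multiplications at depth $O(\log n)$. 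This totals $O(n/\log n)$ multiplications in $\F_{q^n}$ at depth $O(\log n)$. Substituting CCMA for each (which adds $O(n)$ independent bilinear products at depth $1$) yields total size $O(n^2/\log n)$, depth $O(\log n)$, and therefore width $O(n^2/\log^2 n)$, as claimed.

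The main obstacle is to verify that the choice $q^\ell = \Theta(n/\log n)$ really balances the three phases simultaneously and that careful scheduling keeps each phase within depth $O(\log n)$, particularly when some buckets $S_K$ are empty or much larger than the average. A secondary subtlety is the free-Frobenius hypothesis itself, which depends on carrying a normal basis through the Riemann--Roch embedding used by CCMA; this is effectively the substantive construction of the paper, and will presumably be supplied later by the specific choice of the second evaluation map in $\mathcal{L}(2D)$ alluded to in the introduction.
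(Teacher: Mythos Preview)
Your multiplication argument coincides with the paper's: the only non-scalar step in CCMA is the Hadamard product of length $2n+g-1=O(n)$, done in one round on $O(n)$ processors.

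For exponentiation your approach is correct but genuinely different from the paper's. You use a \emph{single-level} base-$q^\ell$ expansion with $q^\ell=\Theta(n/\log n)$ and then group by digit value (Brauer--Yao), obtaining $O(n/\log n)$ multiplications in $\F_{q^n}$ at depth $O(\log n)$; your width bound $O(n^2/\log^2 n)$ is obtained only after a Brent-type rescheduling (implicitly, by computing width as size divided by depth), since at the widest level of your trees up to $\Theta(n/\log n)$ products in $\F_{q^n}$ run simultaneously, which is $\Theta(n^2/\log n)$ processors in $\F_q$ before rescheduling. The paper instead follows a \emph{two-level} von zur Gathen scheme: it writes $k=\sum_{i<s}\bigl(\sum_{j<t}K_{i,j}q^{uj}\bigr)q^{ri}$ with $r\approx\log_q^2 n$, $u\approx\log_q n$, $s=O(n/\log^2 n)$, $t=O(\log n)$, precomputes only the $q^u=O(n/\log^2 n)$ small powers, forms each $y_i=\prod_{j<t}\sigma(x^{K_{i,j}},uj)$ sequentially in $t-1$ steps on $s$ parallel processors, and finishes with a binary tree over the $s$ values $z_i$. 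The point of the second level is that at every parallel step at most $O(n/\log^2 n)$ multiplications in $\F_{q^n}$ occur, so the width bound $O(n^2/\log^2 n)$ holds directly, without any rescheduling. Your version is shorter to state; the paper's buys an explicit per-step processor bound.
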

When considering models S1 and S2, two cases can be distinguished for the multiplication complexity. 
We might be interested either by the
complexity of one multiplication or by the average (amortized) complexity of one multiplication when 
many multiplications are done simultaneously.
Regarding exponentiation, a wise use of CW method allows the complexity to be improved. 

We can state the followings:
\begin{theorem}\label{LS1}
In the model S1, there exist multiplication and exponentiation algorithms in $\F_{q^n}$ such that:
\begin{itemize}
\item[-] multiplication:
   \begin{itemize}
      \item[a)] one multiplication is done in parallel time in $\BigO{1}$ multiplications in $\F_q$ with $\BigO{n^2}$ processors, 
for a total in $\BigO{n^2}$ multiplications;
      \item[b)] in the amortized sense, the parallel time is in $\BigO{1}$ multiplications 
      in $\F_q$ with $\BigO{n^{1+\epsilon}}$ processors, 
for a total in $\BigO{n^{1+\epsilon}}$ multiplications where the value of $\epsilon$ is approximately $0.38$ for 
the best known matrix product methods;
   \end{itemize}
\item[-]  exponentiation is done in a parallel time of 
$\BigO{\log n}$ multiplications in $\F_q$ with $\BigO{n^{2+\epsilon}/\log^{2\epsilon} n}$ processors, 
for a total in $\BigO{n^{2+\epsilon}\log^{1-2\epsilon} n}$ multiplications.
\end{itemize}
\end{theorem}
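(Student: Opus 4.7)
The plan is to derive both parts from the Chudnovsky--Chudnovsky multiplication algorithm (CCMA). An $\F_{q^n}$-multiplication factors there into two $\F_q$-linear evaluation maps on $\mathcal{L}(D)$, a pointwise product at $\BigO{n}$ points of the auxiliary curve (the bilinear layer), and one $\F_q$-linear interpolation map on $\mathcal{L}(2D)$. In model S1 each of the three linear maps becomes a fixed $\BigO{n}\times\BigO{n}$ matrix acting on data, and the scalar multiplications they carry now count in the complexity; it is exactly here that the S1-analysis departs from the NS-analysis underlying Theorem~\ref{LNS}.

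For part~(a), I would realize the two evaluations and the interpolation as ordinary $(2n+1)\times n$ matrix--vector products, each of cost $\BigO{n^2}$ scalar multiplications performed in a single parallel round (the $\BigO{\log n}$ additions that aggregate the partial products are free in S1). Sandwiched between them, the $\BigO{n}$ pointwise bilinear products form one further parallel round. This yields a total of $\BigO{n^2}$ multiplications in parallel depth $\BigO{1}$ using $\BigO{n^2}$ processors, matching item~(a).

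For the amortized part~(b), I would batch $m$ simultaneous $\F_{q^n}$-multiplications: each evaluation map then becomes the product of the same $\BigO{n}\times\BigO{n}$ evaluation matrix with the $\BigO{n}\times m$ data matrix assembled from the inputs, and similarly for interpolation. Using Coppersmith--Winograd and its successors, whose square matrix multiplication exponent satisfies $\omega = 2+\epsilon$ with $\epsilon\approx 0.38$, a standard block tiling gives the rectangular bound $\BigO{n^{2} m^{\omega-2}}$ for such a product whenever $m\le n$. Taking $m$ of order $n$ yields cost $\BigO{n^{\omega}} = \BigO{n^{2+\epsilon}}$ for $n$ multiplications done at once, hence amortized $\BigO{n^{1+\epsilon}}$ per multiplication, still in depth $\BigO{1}$. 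The only technical point to check is that the evaluation-at-points format of CCMA really presents the computation as a single rectangular matrix product, so that fast matrix-multiplication algorithms apply as black boxes.

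For exponentiation, I would exploit the authors' representation of the Riemann--Roch spaces, in which the Frobenius $\phi : x\mapsto x^q$ acts as a circular shift on coordinates and is therefore free. Expanding the exponent in base~$q$ as $e=\sum_{i=0}^{n-1} e_i q^i$ with $0\le e_i < q$ gives
\[
x^e \;=\; \prod_{i=0}^{n-1} \phi^{i}\bigl(x^{e_i}\bigr),
\]
a product of $\BigO{n}$ elements each read off in $\BigO{1}$ time from the precomputed table $\{x^0,\ldots,x^{q-1}\}$. I would evaluate this product via a balanced binary tree of depth $\log n$, batching the $n/2^{k+1}$ independent multiplications at level $k$ into a single rectangular matrix product as in part~(b). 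The main difficulty, and the expected source of the $\log^{1-2\epsilon} n$ factor, is the amortization across tree levels: near the leaves the batch size is large and batching is nearly optimal, whereas near the root the batches shrink and the per-multiplication cost degrades. Summing the per-level cost $\BigO{n^{2} m_k^{\omega-2}}$ with $m_k$ decreasing geometrically up the tree, while calibrating the processor width to $\BigO{n^{2+\epsilon}/\log^{2\epsilon} n}$ so that every level saturates it, is what I expect to yield the claimed total $\BigO{n^{2+\epsilon}\log^{1-2\epsilon} n}$ in parallel depth $\BigO{\log n}$.
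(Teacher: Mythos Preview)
Your treatment of multiplication—both the single-shot case (a) and the amortized case (b)—matches the paper's argument: the maps $T$ and $T^{-1}$ are realized as $\BigO{n}\times\BigO{n}$ matrix--vector products costing $\BigO{n^2}$ scalar multiplications in depth $\BigO{1}$ (additions being free in S1), and batching $\Theta(n)$ simultaneous $\F_{q^n}$-multiplications turns these into square matrix products handled by Coppersmith--Winograd for the amortized bound.

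For exponentiation you take a genuinely different route. The paper does \emph{not} use a plain base-$q$ expansion followed by a binary product tree on $n$ factors. It adapts von zur Gathen's digit-grouping trick, rewriting the exponent as $k=\sum_i\bigl(\sum_j K_{i,j}q^{uj}\bigr)q^{ri}$ with $r\approx\log_q^2 n$ and $u\approx\log_q n$, so that across every one of the $\BigO{\log n}$ parallel steps the number of simultaneous $\F_{q^n}$-multiplications is \emph{uniformly} $\BigO{n/\log^2 n}$. At each step the product $T_1B$, with $B$ having $\Theta(n/\log^2 n)$ columns, is tiled into $\BigO{\log^4 n}$ square blocks of that size and handled by CW, giving $\BigO{n^{2+\epsilon}/\log^{2\epsilon}n}$ scalar multiplications per step; multiplying by $\BigO{\log n}$ steps produces the stated size $\BigO{n^{2+\epsilon}\log^{1-2\epsilon}n}$. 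The $\log^{1-2\epsilon}n$ factor is thus an artifact of the uniform batch size in the von zur Gathen schedule, not of any degradation up a tree.

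Your binary-tree scheme is not wrong in principle, but your last paragraph does not actually perform the summation it announces. If you carry it out you get $\sum_{k\ge 0} n^2 m_k^{\,\omega-2}$ with $m_k=n/2^{k+1}$, a convergent geometric series summing to $\BigO{n^{2+\epsilon}}$ with \emph{no} logarithmic loss—strictly better than the bound you are trying to reverse-engineer, and certainly sufficient to prove the theorem. On the other hand, the leaf level alone already needs $\BigO{n^{2+\epsilon}}$ processors to run in depth $\BigO{1}$, so your claimed width $\BigO{n^{2+\epsilon}/\log^{2\epsilon}n}$ does not follow without an explicit Brent-type rescheduling argument (which is easy in S1 since the CW bilinear multiplications are independent, but must be stated). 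In short: your exponentiation approach is simpler than the paper's and, once properly analyzed, would establish the theorem and slightly more; but the sketch you give is both incomplete and aimed at a logarithmic factor that does not arise in your own construction.
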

\begin{theorem}\label{LS2}
In the model S2, there exist multiplication and exponentiation algorithms in $\F_{q^n}$ such that:
\begin{itemize}
\item[-] multiplication: 
  \begin{itemize}
    \item[a)] one multiplication is done in parallel 
time in $\BigO{\log n}$ operations in $\F_q$ with $\BigO{n^2/\log n}$ processors, 
for a total in $\BigO{n^2}$ operations;
    \item[b)] in the amortized sense, the parallel time is in $\BigO{\log n}$ operations 
    in $\F_q$ with $\BigO{n^{1+\epsilon}/\log n}$ processors, 
for a total in $\BigO{n^{1+\epsilon}}$ operations; recall that the value of $\epsilon$ 
is approximately $0.38$ for the best matrix product methods;
   \end{itemize}
\item[-]  exponentiation is done in a parallel time of $\BigO{\log^2 n}$ 
operations in $\F_q$ with $\BigO{n^{2+\epsilon}/\log^{1+2\epsilon} n}$ processors, 
for a total in $\BigO{n^{2+\epsilon}\log^{1-2\epsilon} n}$ operations.

\end{itemize}
\end{theorem}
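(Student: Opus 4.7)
The plan is to reuse the CCMA-based construction described earlier and simply track, in the S2 model, both scalar multiplications and additions rather than bilinear and scalar multiplications only (as in model S1). Concretely, the algorithm factors a product $xy \in \F_{q^n}$ as $xy = E_{3}(E_{1}(x)\odot E_{2}(y))$, where $\odot$ is a coordinatewise product in $\F_{q}^{N}$ with $N = \BigO{n}$ (the bilinear part, using $\BigO{n}$ non-scalar multiplications by the Chudnovsky--Chudnovsky estimate), and $E_{1}, E_{2}, E_{3}$ are $\F_{q}$-linear maps determined by the chosen bases of ${\mathcal L}(D)$ and ${\mathcal L}(2D)$. The key observation is that in S2 each $E_{i}$ is simply a matrix-vector product over $\F_{q}$, so its cost is controlled by the classical complexity of such products.

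For part (a) of the multiplication claim, I would bound the cost of one product: each of the three linear maps is a dense $\BigO{n}\times\BigO{n}$ matrix-vector product, contributing $\BigO{n^{2}}$ operations in total and admitting a parallel schedule of depth $\BigO{\log n}$ with $\BigO{n^{2}/\log n}$ processors by arranging each output coordinate as a balanced binary summation tree. The $\BigO{n}$ bilinear multiplications add only depth $1$ and width $\BigO{n}$, hence do not affect the bound. For part (b) I would batch $n$ independent products: the three linear stages then become matrix-matrix products of shape $\BigO{n}\times\BigO{n}$ by $\BigO{n}\times\BigO{n}$, which Coppersmith--Winograd (or any $\omega$-fast matrix-product variant) evaluates in $\BigO{n^{\omega}} = \BigO{n^{2+\epsilon}}$ operations, at depth $\BigO{\log n}$ using $\BigO{n^{2+\epsilon}/\log n}$ processors. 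Amortizing over the $n$ simultaneous products yields the stated processor count and total work per product.

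For exponentiation, the plan is to exploit that in the normal basis embedded in ${\mathcal L}(D)$, the Frobenius $x\mapsto x^{q}$ is a free cyclic shift. Hence an arbitrary exponent $e < q^{n}$ can be handled via an Itoh--Tsujii-style divide-and-conquer scheme organized in $\BigO{\log n}$ layers of multiplications, each layer consisting of independent products that can be packed into a single rectangular matrix-matrix product. Each such packed product is evaluated by CW in depth $\BigO{\log n}$, giving overall depth $\BigO{\log^{2} n}$. A careful width-vs-work balance across the $\BigO{\log n}$ layers---so that the batches remain large enough for the CW exponent $\omega \approx 2+\epsilon$ to apply---yields $\BigO{n^{2+\epsilon}/\log^{1+2\epsilon} n}$ processors and total work $\BigO{n^{2+\epsilon}\log^{1-2\epsilon} n}$, which is consistent with the identity depth $\times$ width $\ge$ size.

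The main obstacle will be this last bookkeeping step: verifying that the precise $\log$-power factors in the processor count and the total-operations bound indeed come out as stated. This requires tracking the rectangular matrix-product complexity at the shallower levels of the tree (where the batch size is smaller than $n$) and checking that the CW-type exponent still applies there, so that no additional polylogarithmic loss is incurred. Everything else follows directly from the CCMA structure already set up and from the matrix-product analysis that underlies Theorem~\ref{LS1}.
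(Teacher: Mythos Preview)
Your treatment of the multiplication claims (a) and (b) matches the paper's argument: the CCMA product is three matrix--vector products plus a Hadamard product, and the S2 bounds follow from balanced binary addition trees in case (a) and from batching into square matrix products handled by Coppersmith--Winograd in case (b). Nothing to add there.

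The exponentiation part, however, has a real gap. ``Itoh--Tsujii-style divide-and-conquer'' is not the right primitive: Itoh--Tsujii exploits the very special all-ones base-$q$ expansion of $q^{n-1}-1$ and does not extend to an arbitrary exponent $e<q^n$. If what you actually have in mind is the naive scheme $x^e=\prod_{i}\sigma(x^{e_i},i)$ followed by a binary product tree, then the bottom layer carries $\Theta(n)$ simultaneous multiplications in $\F_{q^n}$, and the packed linear stage at that layer is an $\Theta(n)\times\Theta(n)$ by $\Theta(n)\times\Theta(n)$ matrix product. In S2 that layer alone already needs width $\Theta(n^{2+\epsilon}/\log n)$, which exceeds the claimed $\BigO{n^{2+\epsilon}/\log^{1+2\epsilon}n}$. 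So the stated processor bound is unreachable along this route; the issue is structural, not bookkeeping.

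The paper obtains the exact $\log$-exponents by a different mechanism: a modified von zur Gathen schedule. One writes $e=\sum_i\bigl(\sum_j K_{i,j}q^{uj}\bigr)q^{ri}$ with $r\approx\log_q^2 n$ and $u\approx\log_q n$, precomputes all $x^l$ for $l<q^u$, and then combines via shifts. The crucial effect is that every one of the $\BigO{\log n}$ layers involves only $\BigO{n/\log^2 n}$ simultaneous $\F_{q^n}$-multiplications. The packed linear stage at each layer is then a rectangular product of shape $\Theta(n)\times\Theta(n)$ by $\Theta(n)\times\Theta(n/\log^2 n)$; tiling it into $\Theta(\log^4 n)$ square blocks of side $\Theta(n/\log^2 n)$ and applying CW blockwise gives $\BigO{n^{2+\epsilon}/\log^{2\epsilon}n}$ scalar multiplications per layer, hence width $\BigO{n^{2+\epsilon}/\log^{1+2\epsilon}n}$ in S2 and total size $\BigO{n^{2+\epsilon}\log^{1-2\epsilon}n}$ over the $\BigO{\log n}$ layers at depth $\BigO{\log n}$ each. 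To reach the theorem as stated you need this width-$\BigO{n/\log^2 n}$ scheduling (or something equivalent), not a plain binary tree.
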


\subsection{Organization of the article} 
 After some background on  CCMA algorithm, we describe 
in Subsection~\ref{construction} our method which leads 
to  an effective algorithm  that can directly be implemented. Our algorithm reveals the use of 
matrix-vector products that can easily be parallelized.
In Section~\ref{secExp}, we use 
this algorithm  
to  tackle the problem of computing $x^k$ where $x \in {\mathbb F}_{q^n}$ and $k \geq 1$
and we derive an exponentiation algorithm from the work of von zur Gathen \cite{Gat91,Gat92exp}.
In Section~\ref{effective}, 
we focus on the multiplication in $\F_{16^{n}/\F_{16}}$ and we explain how to construct our
algorithm. A Magma~\cite{Magma} implementation of the multiplication 
algorithm in $\F_{16^{13}/\F_{16}}$ is given in appendix.

\section{A new approach of multiplication and exponentiation algorithms}

First, we present the CCMA algorithm on which is based our method.

\subsection{Original algorithm of Chudnovsky-Chudnovsky (CCMA)}\label{Chud}

Let $F/{\mathbb F}_q$ be an algebraic function field over the finite field ${\mathbb F}_q$
of genus $g(F)$. We denote by $N_1(F/{\mathbb F}_q)$ the number of places of degree one of $F$ over
${\mathbb F}_q$. If $D$ is a divisor, ${\mathcal L}(D)$ denotes the Riemann-Roch space
associated to $D$. We denote by ${\mathcal O}_Q$ the valuation ring of the place $Q$ and
by $F_Q $ its residue class field ${\mathcal O}_Q/Q$
which is isomorphic to $\F_{q^{{\rm deg} (Q)}}$ where ${\rm deg} (Q)$ is the degree of the place $Q$.
The following theorem that makes effective the original algorithm groups some results of \cite{ball1}.

\begin{theorem}\label{mainth}
Let $F/\F_q$ be an algebraic function field of genus $g(F)$ defined over $\F_q$ and
$n$ an integer.
Let us suppose that there exists a place  $Q$ of degree $n$. 

Then, if $N_1(F/{\mathbb F}_q)>2n+2g-2$ 
there is an effective divisor $D$ of degree $n+g-1$ such that:
\begin{enumerate}
\item $Q$ is not in the support of $D$,
\item the evaluation map $E$ defined by
$$\begin{array}{lccc}
 E: & {\mathcal L}(D) & \rightarrow & F_Q \\
    &  f              & \mapsto     & f(Q)
\end{array}$$
is an isomorphism of vector spaces over ${\mathbb F}_q$,
\item there exist $2n+g-1$  
places of degree one $P_i$ which are not in the support of $D$
such that the multi-evaluation map $T$ defined by
$$\begin{array}{lccl}
 T: & {\mathcal L}(2D) & \rightarrow & \left({\mathbb F}_q\right)^{2n+g-1} \\
    &  f              & \mapsto     & \left(\strut f\left(P_1\right),\ldots,f\left(P_{2n+g-1}\right)\right)
\end{array}$$
is an isomorphism.
\end{enumerate}
\end{theorem}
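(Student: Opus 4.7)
The plan is to prove the three items in order, using the Riemann--Roch theorem throughout.

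For items (1) and (2), I would first locate a divisor class of degree $n+g-1$ with the correct behavior on $Q$ and then pass to an effective representative. The key observation is that $\deg(D-Q) = g-1$, so by Riemann--Roch with canonical divisor $K$ one has $\dim \mathcal{L}(D-Q) = \dim \mathcal{L}(K-D+Q)$, and both dimensions vanish simultaneously. Since non-special classes of degree $g-1$ form a dense open subset of $\text{Pic}^{g-1}(F)$, I can pick a class $[A]$ with $\dim \mathcal{L}(A) = 0$ and set $[D_0] = [A] + [Q]$. Then $\dim \mathcal{L}(D_0 - Q) = \dim \mathcal{L}(A) = 0$, and Riemann--Roch applied to $D_0$ itself forces $\dim \mathcal{L}(D_0) = n$, so $[D_0]$ admits an effective representative $D$. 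The condition $Q \notin \text{supp}(D)$ is then automatic: if $v_Q(D) \geq 1$, the constant function $1$ would lie in $\mathcal{L}(D - Q) = 0$, a contradiction. The evaluation map $E$ therefore has trivial kernel $\mathcal{L}(D - Q) = 0$, and since its source and target both have $\F_q$-dimension $n$, it is an isomorphism.

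For item (3), note that $\deg(2D) = 2n + 2g - 2 \geq 2g - 1$, so $2D$ is non-special and $\dim \mathcal{L}(2D) = 2n+g-1$, matching the target dimension of $T$. I would construct $P_1, \ldots, P_{2n+g-1}$ inductively, maintaining the invariant $\dim \mathcal{L}(2D - P_1 - \cdots - P_k) = 2n+g-1-k$. At step $k+1$, I seek a degree-one place $P_{k+1}$ lying outside $\text{supp}(D) \cup \{P_1, \ldots, P_k\}$ at which some element of $\mathcal{L}(2D - P_1 - \cdots - P_k)$ does not vanish. For any fixed nonzero $f$ in this space, the zero divisor $(f)_0$ has degree at most $2n+2g-2-k$ and already contains $P_1, \ldots, P_k$, so the set of degree-one places where every function of the space vanishes is tightly controlled. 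Combined with the hypothesis $N_1 > 2n+2g-2$, a valid $P_{k+1}$ exists at every step, and the induction yields $\mathcal{L}(2D - \sum P_i) = 0$, whence $T$ is an isomorphism.

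The main obstacle I anticipate is the tight counting in this last step: the hypothesis $N_1 > 2n+2g-2$ admits essentially no slack, so I would need to arrange in the first step that $D$ has as few degree-one places in its support as possible, exploiting the freedom to move within the linear equivalence class of $D_0$ (the space of effective representatives has $\F_q$-projective dimension $n-1 \geq 0$). This coupling between the choice of $D$ and the budget available for selecting the $P_i$ is the technical heart of the argument from \cite{ball1}, and is where the full strength of the hypothesis on $N_1$ is consumed.
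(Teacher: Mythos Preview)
The paper does not give its own proof of this theorem; it is stated as a compilation of results from \cite{ball1} and then used as a black box for the subsequent algorithmic constructions. There is therefore no in-paper argument to compare your proposal against.

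Your outline is the standard route and is essentially sound. For items (1)--(2), once a class $[D]$ with $\ell(D-Q)=0$ is found, effectivity of $D$, the condition $Q\notin\mathrm{supp}(D)$, and the bijectivity of $E$ follow exactly as you say. One point to tighten: invoking density of non-special classes in $\mathrm{Pic}^{g-1}$ is a geometric statement over the algebraic closure and does not by itself produce an $\F_q$-rational class; over a finite field you need a counting argument (the paper itself later cites \cite{balb} for exactly this existence question).

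For item (3) you have correctly located the delicate point. Your greedy construction works provided the forbidden set at each step stays strictly below $N_1$, and with the hypothesis $N_1>2n+2g-2$ this forces the degree-one part of $\mathrm{supp}(D)$ to be empty. Your idea of moving within the linear system $|D|$ is the right instinct, but a naive hyperplane count in $\mathbb{P}^{n-1}(\F_q)$ does not close when $N_1$ is large compared to $q$. In \cite{ball1,ball2}, and in the paper's own construction in Section~\ref{si}, this is sidestepped by taking $D$ to be a single place of degree $n+g-1$, so that $\mathrm{supp}(D)$ contains no degree-one place at all and your inductive count goes through verbatim. Making this choice (or an equivalent device) explicit would complete your argument.
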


The chosen framework is the original CCMA algorithm, 
namely using only places of degree one and without derivated evaluation (cf. \cite{ceoz}).
We transform this algorithm in order that it be adapted to both  multiplication and  exponentiation computations.


In this context, the construction of this algorithm is based on the choice of the place $Q$ of degree $n$,  
the effective divisor $D$ (cf. \cite{ball2}) and the bases $ {\mathcal L}(D) $ and $ {\mathcal L}(2D) $.

\subsection{Normal bases}

Recall some notions on normal bases. The finite field ${\mathbb F}_{q^n}$ will be considered as
a vector space of dimension $n$ over the finite field ${\mathbb F}_q$.
Let $\alpha$ be an element of ${\mathbb F}_{q^n}$ such that 
$$\left(\alpha, \alpha^q, \alpha^{q^2}, \ldots, \alpha^{q^{n-1}}\right)$$
is a basis of ${\mathbb F}_{q^n}$ over ${\mathbb F}_q$. Such a basis is called
a normal basis of ${\mathbb F}_{q^n}$ over ${\mathbb F}_q$ and $\alpha$ is called a cyclic element.
Thus, a normal basis is composed of all conjugates of a cyclic element $\alpha$. 
There is  always  a normal basis and furthermore, there is  always a primitive normal basis.
We call a normal polynomial of degree $n$ over ${\mathbb F}_q$,  a polynomial in ${\mathbb F}_q[X]$, 
irreducible over ${\mathbb F}_q$, and 
having for roots
in ${\mathbb F}_{q^n}$ the $n$ conjugates of a cyclic element $\alpha$.
We refer to \cite{lini2} and \cite{gao} for a detailed presentation. 

\medskip
When $\F_{q^n}$ is represented by a normal basis, the $q$th power of an element is just a cyclic shift of 
its coordinates. The repeated use of this operation allows exponentiation to be efficiently parallelized.
Without normal basis \cite{LimL94}, precomputation should be stored for a same base $x$. This makes sense only when
many exponentiation have to be done with this same base and in this case, precomputations are not 
considered in the running time.  

The use of a normal basis has the following benefits:
\begin{itemize}
 \item Substitute lookup table accesses by  circular shifts.
  \item Reduce prior storage.
 \item Avoid the constraint of fixing a base. 
\end{itemize}

\subsection{Method and strategy of implementation}\label{construction}
The construction of the algorithm is based on the choice of the place $Q$ of degree $n$,  
the effective divisor $D$ of degree $n+g-1$ (cf. \cite{ball2}), the bases of spaces ${\mathcal L}(D) $
 and ${\mathcal L}(2D)$ and 
the basis of the residue class field $F_Q$  of the place $Q$. 

In practice, as in \cite{ball2}, we take as a divisor $D$ one place of degree $n+g-1$.
It has the advantage to solve the problem of the support of divisor $D$ (condition (1) of Theorem 
\ref{mainth}) as well as the problem of the effectivity of the divisor $D$.
Furthermore, we require additional properties.

\subsection{Finding places $D$ and $Q$}\label{si}

To build the good places, we draw them at random and  we check that they satisfy the required 
conditions namely :

\begin{enumerate}
\item We draw at random an irreducible polynomial $\mathcal{Q}(x)$ of degree $n$ in $\F_q[X]$ and check that this polynomial is :
     \begin{enumerate}
           \item Primitive.
           \item Normal.
           \item Totally decomposed in the algebraic function field $F/\F_q$ (which implies that there exists a place $Q$ of degree n  
           above the polynomial $\mathcal{Q}(x)$).           
     \end{enumerate}
     \item We choose a place $Q$ of degree n among the $n$ places lying above the polynomial $\mathcal{Q}(x)$. 
     \item We draw at random a place $D$ of degree $n+g-1$ and check that $D-Q$ is a non-special divisor of degree $g-1$, 
     {\it i.e.} $dim {\mathcal L}(D-Q)=0$.
\end{enumerate}

\begin{remark}
In practice, it is easy to find $Q$ and $D$ satisfying these properties in our context since there exist many such places. 
However, it is not true in the  general case (it is sufficient to consider an elliptic curve with only one rational point). 
A sufficient condition for the existence of at least one place of degree $n$ is given by the following inequality:
$$2g(F)+1 \leq q^{\frac{n-1}{2}}\left(q^{\frac{1}{2}}-1\right).$$

When $q\geq 4$, we are sure of the existence of a non-special divisor of degree $g-1$ \cite{balb}.
The larger $q$, the larger  the probability to draw a non-special divisor of degree $g-1$ becomes (Proposition 5.1 \cite{bariro})
but not necessarily as a difference of two places (this is an open problem). However, in practice, 
such divisors are easily found.

\end{remark}
\subsection{Choice of bases of spaces} \label{choixbases}

\subsubsection{The residue field $F_Q$}\label{baseFQ}

When we take a place $Q$ of degree $n$ lying above a normal polynomial in $\F_q[X]$, we mean that the residue class field 
is the finite field $\F_{q^n}$ for which we choose as a representation basis  the normal 
basis $\mathcal B_Q$ generated by a root $\alpha$ of the 
polynomial $\mathcal{Q}(x)$.

\begin{remark}\label{changebasis}
Suppose that the context requires the use of a representation basis of $\F_{q^n}$ which is not the basis $\mathcal{B}_Q$
of the residue class field of the place $Q$. Then, we can easily avoid the problem by a change of basis. 
This requirement may happen  when additional properties on the basis $\mathcal{B}_Q$ 
(cf. Section \ref{ChudNB}) are required. In particular, it would be the case in our context if we could not find a place $Q$ 
of degree $n$ 
above a normal polynomial $\mathcal{Q}(x)\in \F_q[X]$.
\end{remark}

\subsubsection{The Riemann-Roch space $\mathcal{L}(D)$}\label{baseLD}

As the residue class field $F_Q$ of the place $Q$ is isomorphic to the finite field $\F_{q^n}$, 
from now on we identify $\F_{q^n}$ to $F_Q$. 
Notice that the choice of $D$ and $Q$ of Section \ref{si} are such that the map $E$ of Theorem~\ref{mainth} is an isomorphism. 
Indeed, $\deg (D)=n+g-1$,  $dim (D-Q)=0$ yet ${\mathcal L}(D-Q)=Ker(E).$
In particular, we choose for basis of ${\mathcal L}(D)$, the reciprocal 
image $\mathcal{B}_D$ of the basis $\mathcal{B}_Q=(\phi_1,\ldots,\phi_n)$ of  $F_Q$ 
by the evaluation map $E$, namely $\mathcal{B}_D=(E^{-1}(\phi_1),\ldots,E^{-1}(\phi_n))$. 

\subsubsection{The Riemann-Roch space $\mathcal{L}(2D)$} \label{baseL2D}
Note that as the divisor $D$ is an effective divisor, we have ${\mathcal L}(D)\subset {\mathcal L}(2D)$. 
Let $P$ be the map from ${\mathcal L}(2D)$ to ${\mathcal L}(2D)$
defined in the following way:
if $f \in {\mathcal L}(2D)$ then $f(Q)$ is in the residue field $F_Q$ of the place $Q$;
define $P(f) =J\circ E^{-1}\left(\strut f(Q)\right)$ where $J$ is the injection
map from  ${\mathcal L}(D)$ into ${\mathcal L}(2D)$. Then $P$
is a linear map from ${\mathcal L}(2D)$ into ${\mathcal L}(2D)$ whose image is  ${\mathcal L}(D)$.
More precisely,
$P$ is a projection from ${\mathcal L}(2D)$ onto ${\mathcal L}(D)$.
Let ${\mathcal M}$ be the kernel of $P$. Then
$${\mathcal L}(2D)={\mathcal L}(D) \oplus {\mathcal M}.$$

\begin{remark}\label{rem_proj} From the definition of $P$ we remark that
\begin{enumerate}
\item ${\mathcal M}=\{ f\in {\mathcal L}(2D)~|~f(Q)=0\}$,
\item for any $f \in {\mathcal L}(2D)$, we have $f(Q)=P(f)(Q)$.
\end{enumerate}
\end{remark}


As $\deg 2D > 2g-2$, the divisor $2D$ is non-special and $$\dim {\mathcal L}(2D)=2n+g-1.$$
Hence, we define as basis of ${\mathcal L}(2D)$, the basis $\mathcal{B}_{2D}$ defined by: 

$$\mathcal{B}_{2D}=(f_1, \ldots ,f_n,f_{n+1},\ldots , f_{2n+g-1})$$
where $\mathcal{B}_{{\mathcal M}}=(f_{n+1}, \ldots , f_{2n+g-1})$ is a basis  of ${\mathcal M}$ 
and $\mathcal{B}_{D}=(f_1, \ldots ,f_n)$ is the basis of ${\mathcal L}(D)$ defined in Section \ref{baseLD}.

\begin{remark}\label{mainrem}
As a consequence of the choice of the basis
$$\mathcal{B}_{2D}=(f_1, \ldots ,f_n,f_{n+1}, \ldots , f_{2n+g-1})$$
for ${\mathcal L}(2D)$, if  
$$x=(x_1,\ldots,x_n,x_{n+1},\ldots,x_{2n+g-1})\in {\mathcal L}(2D)$$
then 
$$P(x)=(x_1,\ldots,x_n,0,\ldots,0).$$
\end{remark}

\subsection{Product of two elements in ${\mathbb F}_{q^n}$}

In this section, we use as representation basis of spaces $F_Q$, ${\mathcal L}(D)$, ${\mathcal L}(2D)$, 
the basis defined in Section \ref{choixbases}. The product of two elements in ${\mathbb F}_{q^n}$ 
is computed by the algorithm of Chudnovsky and Chudnovsky.
Let $x=(x_1,\ldots,x_n)$ and $y=(y_1,\ldots,y_n)$ be 
two elements of ${\mathbb F}_{q^n}$ given by their components over ${\mathbb F}_{q}$
relative to the chosen basis $\mathcal{B}_Q$. 
According to the previous notation, we can consider that $x$ and $y$ are 
identified to the following elements of ${\mathcal L}(D)$:
$$f_x=\sum_{i=1}^n x_if_i \quad \hbox{and} \quad f_y=\sum_{i=1}^n y_if_i.$$
The product $f_xf_y$ of the two elements $f_x$ and $f_y$ of ${\mathcal L}(D)$ is their product in 
the valuation ring ${\mathcal O}_Q$. This product lies in ${\mathcal L}(2D)$.
We will consider that $x$ and $y$ are respectively the elements
$f_x$ and $f_y$ of ${\mathcal L}(2D)$
where the $n+g-1$ last components are $0$. Now it is clear that knowing $x$ or
$f_x$ by their coordinates is the same thing.
Let us consider the following Hadamard product in $\left({\mathbb F}_q\right)^{2n+g-1}$:
$$\begin{array}{l}(u_1,\ldots,u_{2n+g-1}) \odot (v_1,\ldots,v_{2n+g-1})\\= 
(u_1v_1,\ldots,u_{2n+g-1}v_{2n+g-1}).
\end{array}
$$

\begin{theorem}
The product of $x$ by $y$ is such that
$$f_{xy}=P\left(T^{-1} \left(\strut T(f_x)\odot T(f_y)\right)\right).$$
\end{theorem}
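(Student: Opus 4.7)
The plan is to unravel the definitions of $T$, $P$ and the identification between $\F_{q^n}$ and $F_Q$, then observe that the evaluation map $T$ is multiplicative in each coordinate, so that Hadamard product corresponds to product of functions in $\mathcal{L}(2D)$.

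First I would note that since $D$ is effective we have $\mathcal{L}(D)\cdot\mathcal{L}(D)\subset \mathcal{L}(2D)$, so the product $f_x f_y$ (taken in the valuation ring $\mathcal{O}_Q$, or equivalently in the function field $F$) lies in $\mathcal{L}(2D)$. For each place $P_i$ appearing in the definition of $T$, the evaluation $g\mapsto g(P_i)$ is a ring homomorphism $\mathcal{O}_{P_i}\to\F_q$, hence
\begin{equation*}
(f_x f_y)(P_i) = f_x(P_i)\, f_y(P_i)\quad \text{for } i=1,\ldots,2n+g-1,
\end{equation*}
which rewrites as $T(f_x f_y)=T(f_x)\odot T(f_y)$. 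Since $T$ is an isomorphism by Theorem~\ref{mainth}(3), we obtain $T^{-1}\bigl(T(f_x)\odot T(f_y)\bigr)=f_x f_y$ in $\mathcal{L}(2D)$.

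Next I would apply $P$ and use the chosen identification $\F_{q^n}\simeq F_Q$ via $E$. By construction, $f_x=E^{-1}(x)$ and $f_y=E^{-1}(y)$, so $f_x(Q)=x$ and $f_y(Q)=y$ in $F_Q$. Since evaluation at $Q$ is also a ring homomorphism $\mathcal{O}_Q\to F_Q$, we have
\begin{equation*}
(f_x f_y)(Q)=f_x(Q)\,f_y(Q)=xy.
\end{equation*}
By the definition of $P$, $P(f_x f_y)=J\circ E^{-1}\bigl((f_x f_y)(Q)\bigr)=E^{-1}(xy)=f_{xy}$, viewed inside $\mathcal{L}(2D)$ through $J$. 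Combining the two steps gives $f_{xy}=P\bigl(T^{-1}(T(f_x)\odot T(f_y))\bigr)$, as claimed.

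There is no real obstacle: the only thing to check carefully is that the formal Hadamard product on $\F_q^{2n+g-1}$ really corresponds, through $T$, to the product of functions in $\mathcal{L}(2D)$, and this is just the fact that each $P_i$-evaluation is a ring morphism on the local ring $\mathcal{O}_{P_i}$ (which contains $\mathcal{L}(2D)$ because the $P_i$ are outside the support of $D$, hence also of $2D$). Once this is said, the theorem is a direct consequence of the definitions of $E$, $P$ and $T$ set up in Sections~\ref{baseLD} and~\ref{baseL2D}.
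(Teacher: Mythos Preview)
Your proof is correct and follows essentially the same approach as the paper's: first show $T(f_x)\odot T(f_y)=T(f_xf_y)$ from the multiplicativity of evaluation at each $P_i$, then apply $T^{-1}$ and $P$ and use $(f_xf_y)(Q)=xy$ to identify the result with $f_{xy}$. You simply unpack more of the details (why $f_xf_y\in\mathcal{L}(2D)$, why the $P_i$-evaluations are ring morphisms, and the explicit definition $P(f)=J\circ E^{-1}(f(Q))$), whereas the paper invokes ``the definition of $T$'' and Remark~\ref{rem_proj} directly.
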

\begin{proof}
Indeed, from the definition of $T$ the following holds
$$T(f_x) \odot T(f_y) = T(f_{x}f_{y}).$$
Then 
$$P\left(T^{-1} \left(\strut T(f_x)\odot T(f_y)\right)\right)=
P(f_xf_y).$$
By Remark \ref{rem_proj} we conclude 
$$P(f_xf_y)=f_{xy}.$$
\end{proof}

We can now present the setup algorithm and the multiplication algorithm.
Note that the setup algorithm is only done once.

\begin{algorithm}[H]
\caption{Setup algorithm}
\begin{algorithmic}
\Require $F/{\mathbb F}_{q}, ~  Q,  D, P_1,\ldots, P_{2n+g-1}$.
\Ensure  $T \hbox{ and } T^{-1}.$
        \begin{enumerate}
          \item The elements $x$ of the field
          ${\mathbb F}_{q^n}$ are known by their components relatively to a
fixed basis: $x=(x_1,\ldots,x_n)$ (where $x_i \in {\mathbb F}_{q}$).
          \item The function field $F/{\mathbb F}_{q}$, the place $Q$, the
divisor $D$ and the points $P_1,\ldots, P_{2n+g-1}$ are as in Theorem
\ref{mainth}.
          \item Construct a basis
$(f_1, \ldots, f_n,f_{n+1}, \ldots, f_{2n+g-1})$ of ${\mathcal L}(2D)$ where
$(f_1, \ldots, f_n)$ is the basis of ${\mathcal L}(D)$ defined in section \ref{baseLD} and
$(f_{n+1}, \ldots,  f_{2n+g-1})$ a basis of ${\mathcal M}$.
          \item Any element $x=(x_1, \ldots, x_n)$ in ${\mathbb F}_{q^n}$ is
identified to the element $f_x=\sum_{i=1}^n x_if_i$ of ${\mathcal L}(D)$.
          \item Compute the matrices $T$ and $T^{-1}$.
        \end{enumerate}
\end{algorithmic}
\end{algorithm}

\begin{algorithm}[H]
\caption{Multiplication algorithm}
\begin{algorithmic}
\Require $x=(x_1,\ldots,x_n) \hbox{ and } y=(y_1,\ldots,y_n)$.
\Ensure  $xy.$
       \begin{enumerate}
         \item Compute 
$$\left (\begin{array}{c} z_1\\ \vdots\\ z_n\\z_{n+1} \\ \vdots\\ z_{2n+g-1}
\end{array} \right)= T\left (\begin{array}{c} x_1\\ \vdots\\ x_n\\ 0\\ \vdots\\
0 \end{array} \right)  \hbox{ and } 
\left (\begin{array}{c} t_1\\ \vdots\\t_n\\t_{n+1}\\ \vdots\\ t_{2n+g-1}
\end{array} \right)= T\left (\begin{array}{c} y_1\\ \vdots\\ y_n\\ 0\\ \vdots\\
0 \end{array} \right).
$$ 
        \item Compute $u=(u_1,\ldots, u_{2n+g-1})$ where $u_i=z_it_i$.
        \item Compute $w=(w_1,\ldots,w_{2n+g-1})=T^{-1}(u)$.
        \item Return$(xy=(w_1,\ldots,w_n))$ (remark that in the previous step
we just have  to compute the $n$ first components of $w$).
       \end{enumerate}
\end{algorithmic}
\end{algorithm}

In terms of number of multiplications in $\F_q$, the complexity of this multiplication algorithm is as follows:
calculation  of $z$ and $t$ needs $2(2n^2+ng-n)$ multiplications,
calculation of $u$ needs $2n+g-1$ multiplications and calculation of $w$ needs 
$2n^2+ng-n$ multiplications.  
The total complexity  is bounded by $6n^2+n(3g-1)+g-1$.

The asymptotic analysis of our method needs to consider infinite families of algebraic 
function fields defined over $\F_q$ with increasing genus 
(or equivalently of algebraic curves) having the required properties.
The existence of such families follows from that of recursive towers of algebraic function fields of 
type Garcia-Stichtenoth \cite{gast} reaching the Drinfeld-Vladut bound. 

\begin{remark} Note that,  because of the Drinfeld-Vladut bound, in the case of small basis fields $\mathbb{F}_2$ respectively 
$\mathbb{F}_q$ with $3\leq q<9$, we select the method of the quartic respectively quadratic embedding. 
In these cases, instead of the embeddings, it would be possible to  use places of degree four and respectively two 
(cf. \cite{baro1} and \cite{bapi}) but it requires to generalize our algorithm and it generates significant complications (cf. \cite{tuku}, \cite{babotu}).  
Note also that, in the case of the quartic respectively quadratic embedding of the small fields, the operations are precomputed and  
do not increase the complexity.
\end{remark}

Then it is proved \cite{ball1} from a specialization of the original 
 Chudnovsky algorithm on these families that the bilinear complexity of the multiplication in any degree $n$ extension 
  of $\F_q$ is uniformly linear in $q$ with respect to $n$. Hence, the number of bilinear 
  multiplications is in $O(n)$ and the genus $g$ of the required curves also necessarily increases in $O(n)$.
Consequently, the total number of multiplications/additions/subtractions of our method 
is in $\BigO{n^2}$ and the total number of bilinear multiplications is in $\BigO{n}$. 
 
On some occasions in the paper, the CW algorithm \cite{CW1987} will be used to decrease the number of scalar operations.
Given two square matrices of size $n$, the product can be computed in $\BigO{n^{2+\epsilon}}$ multiplications where $\epsilon<2.38$. 
In fact, if we consider a parallel version of the algorithm in the model S1, a product can be performed in $\BigO{1}$ multiplications
in $\F_q$ using $\BigO{n^{2+\epsilon}}$ processors. This is a consequence of Strassen's normal 
form theorem \cite{Str91}. In the model S2 where scalar multiplications and additions/subtractions have same costs, 
the depth becomes $\BigO{\log n}$ and a rescheduling technique \cite{LD90} allows the reduction of 
the width in $\BigO{n^{2+\epsilon}/\log n}$.

Now, we focus on the parallel complexity of our multiplication algorithm. This actually consists of determining
the parallel complexity of a constant number of matrix-vector products and the product coordinate-wise of two vectors.
First, let us  consider the NS model in which a round represents the time interval for a processor 
to perform one bilinear multiplication in $\mathbb{F}_q$ (scalar operations are considered as free), 
a multiplication can be carry out in a constant number of rounds  with only $\BigO{n}$ processors, as stated in Theorem~\ref{LNS}. 

Regarding the two other models, two cases have to be considered, the non amortized case and the amortized one.
In the model S1 (additions and subtractions in $\mathbb{F}_q$ are considered as free), if a round represents the 
time interval for a processor to perform one multiplication, 
the product can be performed in a constant number of rounds 
with $\BigO{n^2}$ processors, as stated in Theorem~\ref{LS1}a. 
This width corresponds to the asymptotic number of scalar multiplications.
In the model S2 where we take into account all scalar operations in $\mathbb{F}_q$, a 
multiplication in $\mathbb{F}_{q^n}$ can be performed in parallel time $\BigO{\log n}$ operations in
$\mathbb{F}_q$ using $\BigO{n^2/\log n}$ processors, as stated in Theorem~\ref{LS2}a.

If we have $\BigOmega{n}$ multiplications in $\mathbb{F}_{q^n}$ to perform,  the width 
 can be decreased (in an amortized sense)  by using
an optimized matrix multiplication method. It consists in grouping the operands in order that they 
become the columns of square matrices allowing the use of  an efficient method like the CW one. 
More precisely, the method is as follows:

\begin{enumerate}
 \item Store the $a=\BigOmega{2n}$ operands (vectors) as columns in square matrices of size $2n+g-1$, 
 denoted $(B_i)_{i=1 \ldots \lceil a/(2n+g-1) \rceil}$;
 \item Compute the products 
 $$(TB_i)_{i=1 \ldots \lceil a/(2n+g-1) \rceil}=(M_{i,1},M_{i,2},M_{i,3},M_{i,4},M_{i,5},\ldots), $$ 
 where $M_{i,j}$ represent the columns of $TB_i$, by using the Coppersmith-Winograd algorithm (or another efficient method);
 \item 
 Perform the $a/2$ bilinear products $(M_{i,1},M_{i,2}),~(M_{i,3},M_{i,4}), \ldots$;
 \item Store the $a/2$ results as columns in square matrices, denoted \break $(B'_i)_{i=1 \ldots \lceil a/(4n+2g-2) \rceil}$;
 \item Compute the products $(T^{-1}B'_i)_{i=1 \ldots \lceil a/(4n+2g-2) \rceil}$ with the CW method. 
 The results are then stored in the columns of the resulting matrices.
\end{enumerate}

In the model S1, the necessary width to perform step 3 is in $\BigO{a}$ processors. 
The overall width is in fact dominated by steps 2 and 5.
All these products are performed in constant time using $\BigO{n^{2.38}+an^{1.38}}$ processors. 
This represents, in an amortized sense, $\BigO{n^{1.38}}$ processors
per multiplication, as stated in Theorem~\ref{LS1}b. 

In the model S2, a multiplication in $\mathbb{F}_{q^n}$ can be performed in parallel time $\BigO{\log n}$ 
using (thanks to the CW matrix product) $\BigO{n^{1.38}/\log n}$ processors, as stated in Theorem~\ref{LS2}b. 
This last result is obtained using
a rescheduling technique \cite{LD90} which allows the parallel computation of the matrix product to be 
done in $\BigO{\log n}$ operations in $\mathbb{F}_q$
using $\BigO{n^{2.38}/\log n}$ processors, instead of $\BigO{n^{2.38}}$ processors without rescheduling \cite{CW90,GM88}.

\subsection{Product of three elements in ${\mathbb F}_{q^n}$}
The previous algorithm can be iterated in order to obtain the product of three elements $x,y,z$ 
in ${\mathcal L}(D)$ (or equivalently in ${\mathbb F}_{q^n}$).

\begin{algorithm}[H]
\caption{Product of three elements }
\begin{algorithmic}
\Require $x,~y,~z \in {\mathbb F}_{q^n}$.
\Ensure  $xyz.$
\begin{enumerate} 
 \item Compute $$u=T\circ P \circ T^{-1} \left(\strut T(x)\odot T(y)\right),$$
 \item compute $v=T(z)$,
 \item then compute $w=u \odot v$ (this is the Hadamard product in $\left({\mathbb F}_q\right)^{2n+g-1}$),
 \item and finally the result is $P\circ T^{-1}(w)$.
\end{enumerate}

\end{algorithmic}
\end{algorithm}

In terms of number of multiplications in $\F_q$, the complexity of this algorithm is as follows:
the matrix $T_1:=T\circ P \circ T^{-1}$ can be precomputed and the matrix-vector product needs $(2n+g-1)^2$ multiplications.
The total complexity is then $12n^2+n(8g-4)+g^2-1$ including $2(2n+g-1)$ 
bilinear multiplications (this number is almost doubled compare to the preceding algorithm). 
Note that the precomputation of $T_1$ is of interest for the parallel computations.
Asymptotically, the complexity is the same as in the previous case.

This algorithm will be used to construct our exponentiation algorithms in Section~\ref{secExp}. The form of 
the involved matrices is analysed in the next subsection.
\subsection{Form of the involved matrices}
Obviously, the crucial part of the algorithm consuming more time is
the iterative call to $T_1$, namely the iterative call to the composition $T\circ P \circ T^{-1}$.
The matrix $T$  is 
$$
\left(
\begin{array}{ccc}
f_1(P_{1}) 
& \ldots & f_{2n+g-1}(P_{1}))\\
f_1(P_{2}) 
& \ldots & f_{2n+g-1}(P_{2}))\\
\vdots     
& \vdots &    \vdots\\
f_1(P_{n}) 
& \ldots & f_{2n+g-1}(P_{n}))\\
\vdots     
& \vdots &    \vdots\\
f_1(P_{2n+g-1}) 
& \ldots & f_{2n+g-1}(P_{2n+g-1}))
\end{array}
\right).
$$
The matrix $P$ is
$$
\left(
\begin{array}{ccc|ccc}
   &  &  &  &  & \\
   &  &  &  &  & \\ 
   & I_n & &  & 0 & \\
   &  &  &  &  & \\
   &  &  &  &  & \\  
\hline
  &  &  &  &  & \\
    & 0 & &  & 0 & \\
  &  &  &  &  & \\
\end{array}
\right)
$$
where $I_n$ is the unit matrix of size $n\times n$.

Let us define the following blocks:
$$
T=\left(
\begin{array}{ccc|ccc}
   &  &  &  &  & \\
   &  &  &  &  & \\ 
   & U_1 & &  & U_3 & \\
   &  &  &  &  & \\
   &  &  &  &  & \\  
\hline
  &  &  &  &  & \\
    & U_2 & &  & U_4 & \\
  &  &  &  &  & \\
\end{array}
\right)
$$
where $U_1$ is a  $n\times n$-matrix, $U_2$ is a  $(n+g-1)\times n$-matrix, 
$U_3$ is a  $n \times (n+g-1)$-matrix
and $U_4$ is a $(n+g-1)\times (n+g-1)$-matrix.
In the same way, we can write $T^{-1}$ in the following form:
$$
T^{-1}=\left(
\begin{array}{ccc|ccc}
   &  &  &  &  & \\
   &  &  &  &  & \\ 
   & V_1 & &  & V_3 & \\
   &  &  &  &  & \\
   &  &  &  &  & \\  
\hline
  &  &  &  &  & \\
    & V_2 & &  & V_4 & \\
  &  &  &  &  & \\
\end{array}
\right).
$$

Then
$$
T_1=\left(
\begin{array}{ccc|ccc}
   &  &  &  &  & \\
   &  &  &  &  & \\ 
   & U_1V_1 & &  & U_1V_3 & \\
   &  &  &  &  & \\
   &  &  &  &  & \\  
\hline
  &  &  &  &  & \\
    & U_2V_1 & &  & U_2V_3 & \\
  &  &  &  &  & \\
\end{array}
\right).
$$
Moreover, the following relations hold:
$$U_1V_1+U_3V_2=I_n,$$
$$U_1V_3+U_3V_4=0,$$
$$U_2V_1+U_4V_2=0,$$
$$U_2V_3+U_4V_4=I_{n+g-1}.$$

An important problem, which is out of the scope of this paper, is to choose a 
basis $(f_i)$ of ${\mathcal L}(2D)$ and places of degree one $P_i$
in order to obtain a ``simple'' matrix $T_1$. Indeed, a sparse matrix may reduce the number of multiplications.

\subsection{Precomputation and storage of scalar multiplications}
Having a matrix $T$ (or $T_1$), the product $Tx$ for all possible $x=(x_1,\ldots , x_n,0,\ldots, 0)\in \mathcal{L}(2D)$
can be efficiently precomputed. We choose an integer $l$ dividing $n$ and we set $k=n/l$.
For all $i$ such that $0 <i<k-1$,
we denote by $X_{l,i}$ a variable of the form
$(X_1,X_2, \ldots, X_n,0,0, \ldots,0) \in (\mathbb{F}_q)^{2n+g-1}$,
where:
$$X_j= \left\{ 
\begin{array}{lll}
0 & \hbox{if} & 1 \leq j \leq i\cdot l \\
x_j & \hbox{if} & i \cdot l +1 \leq j \leq (i+1)l \\
0 & \hbox{if} & (i+1)l+1 \leq j \leq n.
\end{array}
\right .
$$

Thus, we have $x=X_{l,0} + X_{l,1} + \cdots + X_{l,k-1}$.
We can store in a table $Tab^T_i[\cdot]$ the products $TX_{l,i}$ for the $q^l$ possible values of $X_{l,i}$.
After having stored all the precomputations in tables $(Tab^T_i[\cdot])_{i=0 \ldots k-1}$, we can compute
$Ty$ by evaluating $Tab^T_0[Y_{l,0}] + Tab^T_1[Y_{l,1}] + \cdots + Tab^T_{k-1}[Y_{l,k-1}]$.

As an example, for $\mathbb{F}_{16^{13}}$ we can choose $l=2$, leading to the storage of $6$ tables of $256$ values
plus one table of $128$ values.

\section{Exponentiation algorithms}\label{secExp}
In this section, our aim is to point out the interest, in terms of parallel running time and  number of 
 processors involved, of computing an exponentiation in finite fields ({\it i.e.} $x^k$ in ${\mathbb F}_{q^n}$ 
 with $k\in\mathbb N^*$) based on our model. 
 First, it is natural to consider the well known square and multiply algorithm  (with say, the method ``right-to-left'').
 We describe  this basic algorithm, showing the use of matrices $T$, $P$ and $T_1$.
In a second time,  we consider a more advanced algorithm, based on an idea from von zur Gathen \cite{Gat91} 
that we embed in our model. We show that this algorithm  
reaches optimal depth in terms of operations in $\F_q$.

\subsection{Exponentiation algorithm with a square and multiply method}\label{SQM}
Let $K$ be
the unidimensional array of length $s+1$ containing  
the bits of $k$. This array will be indexed by $i=0,\ldots,s$. More precisely
$$k=\sum_{i=0}^s K[i]2^i.$$

In order to bind operations we must iterate the use of the operator $T_1$.
We obtain Algorithm \ref{SaM_algo}:
\begin{algorithm}
\caption{Square-and-Multiply algorithm (right-to-left)}
\label{SaM_algo}
\begin{algorithmic}
\Require $x,~k$
\Ensure $x^k$
\State $X_0 \leftarrow T(x)$
\State $X \leftarrow (1, 1,\ldots, 1)\in \{0,1\}^{2n+g-1}$
\For{$i$ \textbf{from} $0$ \textbf{to} $s$} 
\If{$K[i]==1$}
  \State $X \leftarrow T_1(X \odot X_0)$
\EndIf
\State $X_0 \leftarrow T_1(X_0 \odot X_0)$
\EndFor
\State $Y \leftarrow P\circ T^{-1}(X)$
\State \textbf{return} $Y$
\end{algorithmic}
\end{algorithm}

Within a same loop turn, operations under the condition "if" are not used in the 
subsequent operations. We consider two sets of processing units $P_1$ and $P_2$,
the set $P_2$  running the operations under the condition "if" while the set 
$P_1$ deals with the other calculations. We assume that the amounts of resources of 
$P_1$ and $P_2$ are the same. 
As an example, let us describe the steps in the calculation of $x^{15}$.
Figure~\ref{exponentiation_wo_precomput} depicts the operations made in parallel.
We remark that at each step, the set $P_1$ or $P_2$ (or both) perform(s) a vector product in
$(\mathbb{F}_q)^{2n+g-1}$ and subsequently a matrix-vector product, 
except for the first and the last step for which only
a matrix-vector product has to be performed. 
At step 0, the set $P_1$ (or $P_2$) computes $t_1=T(x)$. 
The set $P_1$ computes $t_{2^i}=T_1(t_{i-1}\cdot t_{i-1})$ at step $i=1 \ldots 3$. 
For its part, the set
$P_2$ computes $t_3=T_1(t_1 \cdot t_2)$ at step 2, $t_7=T_1(t_3 \cdot t_4)$ 
at step 3 and $t_{15}=T_1(t_8 \cdot t_7)$ 
at step 4. A last step is needed to retrieve the result $x^{15}=P \circ T^{-1}(t_{15})$.
Then we can  remark that three products in $(\mathbb{F}_q)^{2n+g-1}$ are made in parallel, to which must be added
the product performed by $P_1$ at the step 1, for an overall computation time of four 
products in $(\mathbb{F}_q)^{2n+g-1}$.

 \begin{figure}[t]
 \begin{center}
 
 \includegraphics[width=0.9 \linewidth]{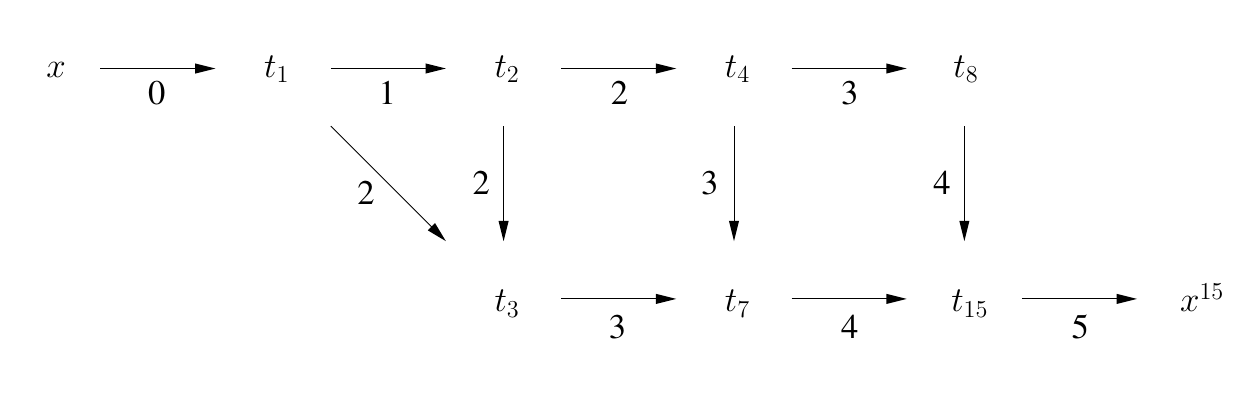}
 
 \caption{Diagram depicting the steps in the calculation of $x^{15}$. The initial and 
 latest steps are special since they involve only a matrix-vector product 
 (scalar multiplications).}
 \label{exponentiation_wo_precomput}
 \end{center}
 \end{figure}

We now consider operations over $\F_q$ and the amount of processing units used. 
In the model NS, the computation time is in $\BigO{n}$  with only $\BigO{n}$ processors. 
In the model S1, the computation time is in $\BigO{n}$ with $\BigO{n^2}$ processors
and in the model S2, the computation time is in $\BigO{n\log n}$ with $\BigO{n^2/\log n}$ processors.

\subsection{Exponentiation algorithm based on our model}\label{ChudNB}
With the use of a normal basis, the base $x$ does not have to be fixed anymore. Moreover,  the
 number of multiplications in $\F_{q^n}$ is reduced and it becomes possible to obtain an "overall" 
 parallel time in $\BigO{\log n}$ without prior storage. 

To make use of normal bases, let us now consider the $q$-ary representation $K$ of the exponent (composed of $n$ 
terms according to Fermat's Little Theorem) 
by writing
$$k=\sum_{i=0}^{n-1} K[i]q^i=\sum_{i=0}^{n-1} \sum_{j=0}^{t} K[i,j]2^jq^i,$$
where $K[i,j]$ for $j=0,\ldots,t$ are the bits of $K[i]$. Let $\sigma$ be the function such that  $\sigma(x,i)$ 
right shifts $i$ times the vector $x$. 
Then, we can express $x^k$ as:
\begin{align*}
\prod_{i=0}^{n-1}x^{K[i]q^i}& = & \prod_{i=0}^{n-1}\sigma(x,i)^{K[i]} & = & \prod_{i=0}^{n-1}\prod_{j=0}^{t}\sigma(x,i)^{K[i,j]2^j}\\
 & = & \prod_{i=0}^{n-1}\sigma(x^{K[i]},i) &= & \prod_{i=0}^{n-1}\sigma\left(\prod_{j=0}^{t}x^{K[i,j]2^j},i\right).
\end{align*}
This gives  two ways of rewriting the Square and Multiply algorithm. In fact, since we are no longer limited 
to the use of two (sets of) processors,
there exist more efficient algorithms, as shown by von zur Gathen \cite{Gat91,Gat92exp}. The idea is that short 
patterns might occur repeatedly in the $q-$ary 
representation of the exponent $k$.
Hence, precomputation of all patterns of a given short length $r\geq 1$ allows the overall cost to be lower.
By setting $s=\lceil n/r \rceil$ and writing $k=\sum_{0\leq i<s}k_iq^{ri}$ with $0\leq k_i<q^r$ for all $i$, von zur 
Gathen obtains optimal depth for an appropriate choice of $r$. In particular it is shown that 
this result is reachable 
with width in $\BigO{n/\log n}$ processors.

\begin{sloppypar}
Lee et al. \cite{LeeKPC05} introduced a rescheduling technique to reduce the number of processors at the counterpart 
of near-optimal depth. 
In what follows, we describe our own variant of von zur Gathen algorithm, 
achieving the same asymptotical efficiencies than the one from Lee et al., while being simpler.
We set $r=\lceil \log^2_q n - 2 \log_q(n) \log_q \log_q n \rceil$ and $u=\lfloor \log_q n - 2 \log_q \log_q n \rfloor$, and
rewrite the exponent in the following form:
$$k=\sum_{i=0}^{s-1} \left( \sum_{j=0}^{t-1} K_{i,j} q^{uj} \right)q^{ri}.$$
Thus, $s=\lceil n/r \rceil$ is in $\BigO{n/\log^2 n}$ and $t=\lceil r/u \rceil$ is in $\BigO{\log n}$. 
The algorithm is divided into five steps:
\begin{enumerate}
 \item for $2\leq l <q^u$, compute $x^l$,
 \item for $0\leq i <s$ and $0 \leq j < t$, compute $y_{i,j}=\sigma\left(x^{K_{i,j}},uj\right)$,
 \item for $0\leq i <s$, compute $y_{i}=\prod_{j=0}^{t-1}y_{i,j}$,
 \item for $0\leq i <s$, compute $z_{i}=\sigma\left(y_i,ri\right)$,
 \item return $x^k=\prod_{i=0}^{s-1}z_i$.
\end{enumerate}
We first examine the parallel time complexities in terms of multiplication in $\mathbb{F}_{q^n}$. 
We can use a binary tree of multiplications (executed from root to leaves) to
compute Step 1 in $\lceil \log_2(q^u-1) \rceil$ multiplications using 
$\max\{2^{\lceil \log_2(q^u-1) \rceil-2},q^u-1-2^{\lceil \log_2(q^u-1) \rceil-1}\}$ processors. 
Step 2 is free. In step 3, each $y_i$ for $0\leq i <s$ can be computed by distinct processors
in $t-1$ multiplications. Step 4 is free. 
Step 5 can be computed with a binary multiplication tree 
(executed from leaves to root) in $\lceil \log_2 s\rceil$ multiplications using $\lfloor s/2 \rfloor$ processors.
By summing the times of each step, the overall depth can be upper bounded by
\begin{multline*}
\left\lceil \log_2\frac{n}{\log_q^2n} \right\rceil + 
\left\lceil \frac{\log_qn+1}{\log_qn-2\log_q\log_qn-1} \right\rceil + \\
\left\lceil \log_qn \right\rceil + \left\lceil \log_2 \left(\frac{n}{\log_q(n/\log_q^2n)\log_qn}+1\right) \right\rceil
\end{multline*}
for sufficiently large $n$.
Moreover, by using an optimisation from von zur Gathen, the parallel execution time of Step 1 can be reduced to 
approximately $\log_2q+\log_2\log_q n$.  Thus, the overall depth is in $\BigO{\log n}$ multiplications. 
It can be noticed that at each parallel step, the number of processors involved 
stays in  $\BigO{n/\log^2 n}$. Consequently, this overall depth is achieved with a width in $\BigO{n/\log^2 n}$ processors.
\end{sloppypar}

\par From now on, we scale up the number of processors to optimize the running time in terms of operations in $\F_q$.
When considering this algorithm in our  Chudnovsky model, we consider $\BigO{n/\log^2 n}$ sets of $\BigO{n}$ processors 
if the scalar multiplications in $\mathbb{F}_q$ are considered free. Otherwise, we consider sets of $\BigO{n^2}$ processors. 

\begin{algorithm}[h]
\caption{Precomputation (modified von zur Gathen)
\newline Parallel works assigned to the sets of processors $(P_i)_{i=1..l}$ 
\newline where $l=\max(2^{\lceil \log_2(q^u-1) \rceil-2},q^u-1-2^{\lceil \log_2(q^u-1) \rceil-1})$}
\label{PrecomputeVzG}
\begin{algorithmic}
\Require $x,q$
\Ensure $x^d$ for $2\leq d<q^u $
\State $x_{1} \leftarrow x$
\State $h \leftarrow \lceil \log_2(q^u-1)\rceil$
\ForAll {$i \in \{ 1, \ldots, h-1\}$}
\ForAll {$j \in \{ 2^i, \ldots, 2^{i+1}-1 \}$\textbf{ the set} $P_{j-2^{i}+1}$}
\State $x_{j} \leftarrow P\circ T^{-1}(T(x_{\lceil j/2 \rceil} ).T(x_{\lfloor j/2 \rfloor}))$ 
  \EndFor
\EndFor
  \ForAll {$j \in \{ 2^{h}, \ldots,q^{u}-1\}$\textbf{ the set} $P_{j-2^{\lceil \log_2(q^u-1)\rceil}+1}$}
\State $x_{j} \leftarrow P\circ T^{-1}(T(x_{\lceil j/2 \rceil} ).T(x_{\lfloor j/2 \rfloor}))$ 
  \EndFor
  \State \textbf{return} $(x_i)_{i\in \{2,\ldots , q^u-1\}}$
\end{algorithmic}
\end{algorithm}

In terms of bilinear multiplications in $\F_{q}$, Step 1, described 
in Algorithm \ref{PrecomputeVzG}, is performed in depth $\BigO{\log n}$ and width $\BigO{n^2/\log^2 n}$. 
Step 2 and 4 consist in the shift of the first $n$ coordinates and are thus considered free.
Step 3, the details of which are left to the reader, is performed in depth $\BigO{\log n}$ and width $\BigO{n^2/\log^2 n}$.
The last step, which consists in computing $x^k=\prod_{0\leq i<s}z_i$, is a binary multiplication tree.
It is executed for a cost of $\BigO{\log n}$ bilinear multiplications using $\BigO{n^2/\log^2 n}$ processors. 
Overall,  in the model NS, this algorithm is done 
in depth $\BigO{\log n}$, width $\BigO{n^2/\log^2 n}$,  and size $\BigO{n^2/\log n}$, as stated in Theorem~\ref{LNS}. 
We let the reader deduce that, $i$) in the model S1, in terms of multiplications in $\mathbb{F}_q$, this algorithm 
is done in depth $\BigO{\log n}$, width $\BigO{n^3/\log^2 n}$,  and size $\BigO{n^3/\log n}$;
$ii$) in the model S2, in terms of any operations in $\mathbb{F}_q$, this algorithm is done in
depth $\BigO{\log^2 n}$, width $\BigO{n^3/\log^3 n}$,  and size $\BigO{n^3/\log n}$.~\\

\paragraph{\textbf{Reducing the number of scalar operations. }} 
We have seen that in the scalar model, the high number of processors 
is due to the high number of scalar operations (in particular the matrix-vector products $Tx$, $T_1x$ and $T^{-1}x$). 
At each step of the computation, 
the number of multiplications in $\mathbb{F}_{q^n}$ done in parallel is in $\BigO{n/\log^2 n}$ in the worst case. 
Thus, at a step of the computation involving
multiple parallel matrix-vector products, 
instead of performing 
$\BigO{n/\log^2 n}$ separate matrix-vector products, we  write the $\BigO{n/\log^2 n}$ vectors as 
columns of a matrix $B$, then complete this matrix with zero
columns in order to obtain a square matrix. Now, we have a product of two square matrices that 
can be performed using  the Coppersmith-Winograd method, thus reducing the number of scalar operations. 
In the S1 model, the Coppersmith-Winograd product can be performed in  $\BigO{1}$ multiplications in $\F_q$
using $\BigO{n^{2.38}}$ processors. This optimization allows the exponentiation to be done in 
depth $\BigO{\log n}$, width $\BigO{n^{2.38}}$ and size $\BigO{n^{2.38}\log n}$.  
In the model S2, the Coppersmith-Winograd product can be performed in  $\BigO{\log n}$ operations in $\F_q$
using $\BigO{n^{2.38}/\log n}$ processors. The optimization allows the exponentiation to be done in 
depth $\BigO{\log^2 n}$, width $\BigO{n^{2.38}/\log n}$ and size $\BigO{n^{2.38}\log n}$.  

Remark that the number of zero columns in the matrix $B$ may not be negligible.
Thus, instead of filling  $B$ with zero columns in order to obtain a square matrix, we can slightly improve the complexity by using 
the CW method in the following way: Consider a matrix $B$ containing only
$\BigTheta{n/\log^2 n}$ vectors. Let us denote by $L$ the number of columns of $B$ 
($L$ is then in $\BigTheta{n/\log^2 n}$). We partition $B$ in square submatrices of size $L$ 
so that $B$ is seen as a column $(B_1, B_2, \ldots, B_j)^T$ of blocks (with $j$ in $\BigO{\log^2 n}$). 
In the same way, we partition $T_1$ in square submatrices of size $L$ so that a row $i$ of blocks of $T_1$ 
is represented as $(A_{i1}, A_{i2}, \ldots, A_{ij})$.
The sum of products $\sum_{k=1}^j A_{ik}B_k$ corresponds to the $i$-th block of the resulting column of blocks.
A product $A_{ik}B_k$ is done using CW in $\BigO{(\frac{n}{\log^2n})^{2.38}}$ scalar multiplications. 
Since we have $\BigO{\log^4 n}$ such products to compute $T_1B$, the 
overall number of scalar multiplications is in $\BigO{n^{2.38}/\log^{0.76}n}$. 
Consequently, in the model S1 the product $T_1B$ can be computed in $\BigO{1}$ multiplications in $\mathbb{F}_q$ 
using $\BigO{n^{2.38}/\log^{0.76}n}$ processors, whereas in the model S2, it can be computed in $\BigO{\log n}$ 
operations in $\mathbb{F}_q$ 
using $\BigO{n^{2.38}/\log^{1.76}n}$ processors.

We can substitute these last results in the case of the parallel exponentiation to obtain the stated complexities
of Theorem~\ref{LS1} and Theorem~\ref{LS2}, with the  current exponent for the best optimized matrix product:  in the model S1,
$x^k$ can be computed in $\BigO{\log n}$  multiplications in $\mathbb{F}_q$ using $\BigO{n^{2.38}/\log^{0.76}n}$ 
processors for a size of $\BigO{n^{2.38}\log^{0.24}n}$  multiplications in $\mathbb{F}_q$, 
whereas in the model S2, 
$x^k$ can be computed in  $\BigO{\log^2 n}$ 
operations in $\mathbb{F}_q$ using $\BigO{n^{2.38}/\log^{1.76}n}$ processors for a size of $\BigO{n^{2.38}\log^{0.24}n}$ 
operations in $\mathbb{F}_q$.

 \section{Multiplication in $\F_{16^{n}/\F_{16}}$}\label{effective}
 
 Set $q=16$ and $n=13, 14, 15$. From now on, $F/\F_q$ denotes the algebraic function field
 associated to the hyper elliptic curve $X$ with plane model $y^2+y=x^5$, of genus two. This curve has 33 
 rational points, which is maximal over $\F_q$ according to the Hasse-Weil bound. We represent $\F_{16}$  
 as the field $\F_2(a)=\F_2[X]/(P(X))$ where $P(X)$ is the irreducible polynomial $P(X)= X^4+X+1$ and 
 $a$ denotes a primitive root of $P(X)= X^4+X+1$. Let us give the projective coordinates $(x:y:z)$ of 
 rational points of the curve $X$:
 
 \vspace{.5em}
 
 $$
 \begin{array}{lll}
 P_{\infty}=(0: 1 : 0) & P_{2}=(0: 0: 1) & P_{3}=( 0: 1: 1) \\
 P_{4}=(a : a: 1)& P_{5}=( a: a^4: 1) & P_{6}=(a^2: a^2: 1) \\
 P_{7}=( a^2: a^8: 1) & P_{8}=( a^3: a^5: 1)& P_{9}=( a^3: a^{10}: 1) \\
 P_{10}=(a^4: a: 1) & P_{11}=( a^4: a^4: 1) & P_{12}=( a^5: a^2: 1) \\
 P_{13}=(a^5 : a^8: 1) & P_{14}=(a^6: a^5: 1) & P_{15}=( a^6: a^{10}: 1) \\
 P_{16}=( a^7: a: 1) & P_{17}=( a^7: a^4: 1)& P_{18}=(a^8: a^2: 1) \\
 P_{19}=( a^8: a^8: 1) & P_{20}=( a^9: a^5: 1) & P_{21}=( a^9:a^{10} : 1) \\
 P_{22}=(a^{10}: a: 1) & P_{23}=( a^{10}: a^4: 1) & P_{24}=( a^{11}: a^2: 1) \\
 P_{25}=( a^{11}:a^8 :1 ) & P_{26}=(a^{12}:a^5 : 1) &P_{27}=( a^{12}: a^{10}: 1) \\
 P_{28}=( a^{13}: a: 1) & P_{29}=( a^{13}: a^4: 1) & P_{30}=(a^{14}: a^2: 1) \\
 P_{31}=( a^{14}: a^8: 1) & P_{32}=( 1: a^5: 1) & P_{33}=( 1: a^{10}: 1) \\
 \end{array}
 $$
 
 \subsection{Construction of the required divisors}
 
 \subsubsection{A place $Q$ of degree n}\label{Q13}
 
 It is sufficient to take a place $\mathcal{Q}$ of degree $n$ in the rational function field $\F_{q}(x)/\F_q$, 
 which totally splits in $F/\F_q$. It is equivalent to choose a monic irreducible polynomial 
 $\mathcal{Q}(x)\in \F_{q}[x]$ of degree n such that its roots $\alpha_i$ in $\F_{q^n}$ satisfy $Tr_{\F_{2}}(\alpha_i^5)=0$ 
 for $i=1,...,n$ where the map $Tr_{\F_{2}}$ denotes the classical function Trace over $\F_{2}$ by \cite[Theorem 2.25]{lini2}.
 In fact, it is sufficient to verify that this property is satisfied for only one root since a finite field is Galois.
 Moreover, in the context of our method, we require that this irreducible polynomial $\mathcal{Q}(x)$ corresponds
 to a normal polynomial (cf. Section \ref{baseFQ}). 

 \vspace{1em}
 
 For example, for the extension $n=13$, we choose the primitive normal polynomial 
 
 \begin{equation}
 \begin{aligned}
 \mathcal{Q}(x)= & x^{13}+a^{6}x^{12}+a^{5}x^{11}+a^{11}x^{10}+x^{9}+a^{12}x^{8}+\\
 & a^{7}x^{7}+a^{7}x^{5}+a^{2}x^{4}+a^{11}x^{3}+a^{8}x^{2}+a^{6}x+a^{14} .
 \end{aligned}
 \end{equation}
 
 Let $b$ be one primitive root of  $\mathcal{Q}(x)$. It is easy to check that $Tr_{\F_{2}}(b^5)=0$, hence 
 the place $(\mathcal{Q}(x))$ of $\F_{16}(x)/\F_{16}$ 
 is totally splitted in the algebraic function field $F/\F_q$, which means that there exist 
 two places of degree $n$ in $F/\F_q$ lying over the place $(\mathcal{Q}(x))$ of 
 $\F_{16}(x)/\F_{16}$. 
  For the place $Q$ of degree $n$ in the algebraic function field $F/\F_q$, we consider one of the two 
  places in $F/\F_q$ lying over the place 
 $(\mathcal{Q}(x))$ of $\F_{16}(x)/\F_{16}$, namely the orbit of the $\F_{16^{13}}$-rational point 
 $\mathcal{P}_{1i}=(\alpha_i, \beta_i:1)$ where $\alpha_i$ 
 is a root of $\mathcal{Q}(x)$ and 
 $\beta_i= a^6\alpha_i^{12} + a^{13}\alpha_i^{11} + a\alpha_i^{10} + a^{13}\alpha_i^9 + a^8\alpha_i^8 + a\alpha_i^7 + a^8\alpha_i^6 +
   a^9\alpha_i^5 + a^5\alpha_i^4 + a^2\alpha_i^2 + a^8\alpha_i + a^{13}$ for $i=1,...,13$. 
   Notice that the second place is given by the conjugated points 
   $\mathcal{P}_{2i}=(\alpha_i: \beta_i+1:1)$ for $i=1,...,13$.
 
  \vspace{1em}
  
   \subsubsection{A place $D$ of degree n+g-1}
 
 For the divisor $D$ of degree $n+g-1$, we choose a place $D$ of degree 14 according to the method used for the place $Q$. 
 We consider the orbit of the $\F_{16^{14}}$-rational point $\mathcal{P}_{1i}=(\gamma_i, \delta_i:1)$ where $\gamma_i$ 
 is a root of $\mathcal{D}(x)=x^{14} + a^9x^{13} + a^6x^{12} + a^7x^{11} + a^{11}x^{10} + a^{12}x^9 + a^{10}x^8 +
   a^6x^7 + a^7x^6 + a^{10}x^5 + a^{14}x^4 + x^3 + x^2 + a^3x + a$
 and $\delta_i=a^4\gamma_i^{12} + a^8\gamma_i^{11} + a^7\gamma_i^9 + a^2\gamma_i^8 + a^3\gamma_i^7 + a^8\gamma_i^6 
+ a^4\gamma_i^5 +a^{14}\gamma_i^4 + \gamma_i^2 + a^6\gamma_i + a^3$ for $i=1,...,14$.
 Notice that the second place is given by the conjugated points $\mathcal{T}_{2i}=(\gamma_i, \delta_i+1:1)$ for $i=1,...,14$.
 
  \vspace{1em}
  
  The place $Q$ and the divisor $D$ satisfy the good properties since the dimension of the divisor $D-Q$ is 
  zero which means that the divisor  
  $D-Q$ is non-special of degree $g-1$.
  
  \subsubsection{The basis of the residue class field $F_Q$}
  
  We choose as basis of the residue class field $F_Q$ the normal basis ${\mathcal B}_Q$ associated to the place $Q$ 
  obtained in Section \ref{Q13}.

  \subsubsection{The basis of $\mathcal{L}(D)$}
  
  We choose as basis of the Riemann-Roch space $\mathcal{L}(D)$ the basis ${\mathcal B}_D=(f_1,...,f_n)$ 
  such that $E({\mathcal B}_D)={\mathcal B}_Q$ 
  is a normal basis of $F_Q$ as in Section \ref{baseLD}. 
  Any element $f_i$ of ${\mathcal B}_D$ is such that $$f_i(x,y)=\frac{f_{i1}(x)y+f_{i2}(x)}{\mathcal{D}(x)},$$
  where $f_{i1}, f_{i2}\in \F_{16}[x]$. To simplify, we set $f_i(x,y)=(f_{i1}(x), f_{i2}(x))$. 
  Let us give the elements of ${\mathcal B}_D$:
  
  \vspace{1em}
  
$f_1(x,y)=(a^{13}x^{11} + a^{10}x^{10} + a^3x^9 + a^{10}x^8 + a^{14}x^7 + a^{11}x^6 + a^8x^5 + a^{11}x^4 +
   x^3 + ax^2 + a^{11}x + a^{11}, a^{12}x^{14} + a^{12}x^{13} + a^9x^{12} + x^{11} + a^8x^{10} + a^{13}x^9 + a^{12}x^8 +
   ax^7 + a^5x^6 + x^5 + a^{13}x^4 + a^5x^3 + a^{12}x^2 + a^4x)$,\\
      
$f_2(x,y)=(a^{11}x^{11} + a^5x^{10} + x^9 + ax^8 + a^{14}x^7 + a^{11}x^6 + a^2x^5 + a^4x^4 + a^7x^3
   + a^7, a^8x^{14} + a^7x^{13} + a^{12}x^{12} + ax^{11} + a^3x^{10} + a^7x^9 + a^{10}x^8 + a^9x^7 + a^{12}x^6 +
   a^{11}x^5 + a^{13}x^4 + a^{14}x^3 + a^{13}x^2 + a^2x + a^{12})$,\\
   
$f_3(x,y)=(a^4x^{11} + a^8x^{10} + a^8x^9 + x^8 + a^2x^7 + a^{14}x^6 + a^2x^5 + a^4x^4 + a^{12}x^3
   + a^3x^2 + a^4, ax^{14} + a^7x^{13} + a^6x^{12} + ax^{11} + a^9x^{10} + a^{11}x^9 + a^7x^8 + a^3x^7 +
   a^7x^6 + ax^5 + a^7x^4 + a^{13}x^3 + a^2x^2 + a^8x)$,\\
        
$f_4(x,y)=(a^5x^{11} + a^{13}x^{10} + a^2x^9 + a^8x^8 + a^9x^7 + a^6x^6 + a^2x^4 + a^6x^3 +
   a^{13}x^2 + a^9x + a^{11}, a^3x^{14} + a^{12}x^{13} + a^5x^{12} + a^6x^{11} + a^{11}x^{10} + a^3x^9 + a^5x^8 +
   a^2x^7 + a^{11}x^6 + a^2x^5 + a^{11}x^4 + a^7x^3 + ax^2 + a^4x + a^{12})$,\\  
   
$f_5(x,y)=(a^4x^{11} + a^2x^9 + ax^8 + a^{13}x^7 + a^{12}x^6 + x^5 + ax^4 + a^{13}x^3 + a^{14}x^2 +
   ax + a^6, a^{11}x^{14} + a^{10}x^{13} + x^{12} + a^{12}x^{11} + a^3x^{10} + a^{12}x^9 + a^9x^8 + a^4x^7 + a^{14}x^6 +
   a^2x^5 + a^{11}x^4 + a^{11}x^3 + a^6x^2 + a^5x + a^3)$,\\
   
$f_6(x,y)=(a^6x^{11} + a^{12}x^{10} + a^{10}x^9 + a^7x^8 + a^8x^7 + a^6x^5 + x^4 + a^{13}x^3 +
   a^8x^2 + 1, a^{10}x^{14} + a^4x^{13} + x^{12} + a^4x^{11} + a^2x^{10} + a^7x^9 + a^5x^8 + a^{13}x^7 + ax^6 +
   a^6x^5 + a^9x^4 + a^7x^3 + a^8x^2 + a^2x + a^{11})$,\\   
      
$f_7(x,y)=(x^{11} + a^5x^{10} + ax^9 + ax^8 + a^{10}x^7 + a^{12}x^6 + a^{14}x^5 + a^3x^4 + a^3x^3 +
   x^2 + a^4x + a^9, a^7x^{14} + a^3x^{13} + a^8x^{12} + a^4x^{11} + a^6x^{10} + a^6x^9 + a^5x^8 + a^3x^7 +
   a^{13}x^6 + a^6x^5 + a^4x^4 + a^{14}x^3 + a^{11}x^2 + a^{11})$,\\  
     
$f_8(x,y)=(a^6x^{11} + a^6x^{10} + a^{12}x^9 + x^8 + a^4x^6 + ax^5 + a^{11}x^4 + x^3 + ax^2 +
   a^{13}x + a^3, a^9x^{14} + a^7x^{13} + a^{14}x^{12} + a^9x^{11} + a^7x^{10} + a^8x^9 + a^{13}x^8 +
   a^{10}x^7 + a^{10}x^5 + a^{10}x^4 + a^9x^3 + a^7x^2 + ax + a)$, \\ 
     
$f_9(x,y)=(x^{11} + a^{12}x^{10} + a^{13}x^9 + a^{14}x^8 + a^{13}x^6 + a^{14}x^5 + a^4x^4 + a^{11}x^3 +
   a^2x^2 + x + a^{11}, x^{14} + a^{14}x^{13} + a^5x^{12} + a^5x^{11} + ax^{10} + a^6x^9 + a^3x^8 + a^{11}x^7 + a^{11}x^6
   + a^5x^5 + a^{12}x^4 + x^3 + a^{13}x^2 + a^6x + a^4)$,\\   
      
$f_{10}(x,y)=(a^5x^{11} + a^5x^{10} + a^4x^9 + ax^8 + a^9x^7 + a^5x^6 + a^5x^5 + a^4x^4 + a^6x^3
   + a^{14}x^2 + a^7x + a^{12}, a^2x^{14} + a^7x^{13} + a^{10}x^{11} + a^9x^{10} + a^{14}x^8 + x^7 + x^6 + a^5x^5 +
   a^{11}x^4 + a^6x^3 + a^{12}x^2 + a^6x + a^{13})$,\\
      
$f_{11}(x,y)=(a^{14}x^{11} + a^{14}x^{10} + a^{13}x^9 + a^{11}x^8 + a^7x^7 + a^9x^6 + a^{11}x^5 + a^3x^4 +
   a^6x^3 + a^8x^2 + x + a^9, a^{12}x^{14} + a^4x^{13} + a^5x^{12} + a^5x^{11} + a^{10}x^{10} + a^{10}x^9 + a^{12}x^8
   + a^6x^7 + ax^6 + a^2x^4 + a^9x^2 + a^{11}x + a^{10})$,\\
     
$f_{12}(x,y)=(a^8x^{11} + a^9x^{10} + a^2x^9 + a^3x^8 + ax^7 + a^{14}x^5 + x^4 + a^{11}x^3 + a^3x^2 +
   ax + a^{12}, a^{13}x^{14} + a^{13}x^{13} + a^8x^{12} + a^{14}x^{11} + a^4x^{10} + a^{11}x^9 + a^3x^8 + a^{10}x^6 + a^6x^5
   + a^9x^3 + a^{14}x^2 + a^{10}x + a^3)$, \\  
      
$f_{13}(x,y)=(a^{12}x^{11} + a^5x^{10} + x^9 + a^5x^8 + ax^7 + a^{10}x^6 + a^7x^5 + a^5x^4 + a^3x^2 +
   x + a^4, a^4x^{14} + a^9x^{13} + a^{14}x^{12} + a^7x^{11} + a^9x^{10} + a^5x^8 + a^{14}x^7 + a^{13}x^6 + a^2x^5
   + a^5x^4 + a^2x^3 + a^8x^2 + ax)$,\\    
       
 \vspace{1em}
 
 \subsubsection{The basis of $\mathcal{L}(2D)$}
 
 Let ${\mathcal B}_{2D}=(g_1,...,g_{2n+g-1})$ be a basis of $\mathcal{L}(2D)$ as defined in Section \ref{baseL2D}. 
 Since the divisor $D$ is effective, we can complete the basis $f$ of  $\mathcal{L}(D)$ in $\mathcal{L}(2D)$. 
 Then any element $g_i$ of $g$ 
 is such that $g_i(x,y)=f_i(x,y)$ for $i=1,...,n$ and $g_i(x,y)=\frac{g_{i1}(x)y+g_{i2}(x)}{\mathcal{D}^2(x)}$
 (which we will denote $g_i(x,y)=(g_{i1}, g_{i2})$), with $g_{i1}, g_{i2} \in \F_{16}[x]$ for $i=n+1,...,2n+g-1$.
 Moreover, we require the completion of $\mathcal{L}(D)$ by the kernel of the map $P$. 
 Hence, the set $(g_i)_{i=n+1,...,2n+g-1}$ is a basis of the kernel of the restriction map over $\mathcal{L}(2D)$ 
 of the canonical projection from the valuation ring of the place  $Q$ in its residue class field $F_Q$.
 
 \vspace{1em}
 
 $g_{14}(x,y)=(a^{13}x^{24} + a^7x^{23} + a^2x^{22} + a^7x^{21} + a^{14}x^{20} + a^2x^{19} + a^9x^{18} +
   a^6x^{17} + a^{10}x^{16} + a^6x^{15}+ a^{12}x^{14}+ x^{13} + a^{12}x^{12} + a^3x^{11} +
   a^8x^9 + a^4x^8 + a^8x^7 + a^9x^6 + a^{12}x^5 + a^{14}x^4 + x^3 + a^8x^2
   + x + a^9, a^7x^{26} + ax^{25} + a^{10}x^{23} + a^{14}x^{22} + a^4x^{21} + a^8x^{20}
   + a^7x^{18} + a^9x^{17} + ax^{16} + a^{12},x^{15}+ a^{11}x^{14}+ a^2x^{13} + a^9x^{12}
   + a^{10}x^{11} + a^9x^{10} + ax^8 + a^{11}x^7 + a^5x^6 + a^6x^5 + a^5x^4 +
   a^{14}x^3 + a^5x^2 + a^{11}x + a^8)$,\\

$g_{15}(x,y)=(a^{13}x^{25} + a^7x^{24} + a^2x^{23} + a^7x^{22} + a^{14}x^{21} + a^2x^{20} + a^9x^{19} +
   a^6x^{18} + a^{10}x^{17} + a^6x^{16} + a^{12}x^{15}+ x^{14}+ a^{12}x^{13} + a^3x^{12} +
   a^8x^{10} + a^4x^9 + a^8x^8 + a^9x^7 + a^{12}x^6 + a^{14}x^5 + x^4 + a^8x^3
   + x^2 + a^9x,a^7x^{27} + ax^{26} + a^{10}x^{24} + a^{14}x^{23} +
   a^4x^{22} + a^8x^{21} + a^7x^{19} + a^9x^{18} + ax^{17} + a^{12}x^{16} + a^{11}x^{15}+
   a^2x^{14}+ a^9x^{13} + a^{10}x^{12} + a^9x^{11} + ax^9 + a^{11}x^8 + a^5x^7 +
   a^6x^6 + a^5x^5 + a^{14}x^4 + a^5x^3 + a^{11}x^2 + a^8x)$,\\
   
$g_{16}(x,y)=(a^7x^{25} + a^{10}x^{24} + a^9x^{23} + a^{11}x^{22} + a^4x^{21} + x^{20} + a^5x^{19} +
   ax^{18} + a^{14}x^{17} + a^6x^{16} + a^4x^{15}+ a^5x^{14}+ a^9x^{13} + a^{10}x^{12} +
   a^3x^{11} + a^{14}x^{10} + a^{10}x^9 + a^4x^8 + a^8x^7 + a^{10}x^6 + x^5 +
   a^5x^4 + a^4x^2 + a^4x + a, ax^{27} + a^4x^{26} + a^{11}x^{25} + a^{13}x^{24}
   + x^{23} + a^{12}x^{22} + a^8x^{20} + a^{11}x^{19} + a^4x^{18} + ax^{17} + a^8x^{16} +
   a^8x^{15}+ a^2x^{14}+ a^9x^{13} + a^{14}x^{12} + a^2x^{11} + a^5x^{10} + a^3x^9 +
   a^{11}x^8 + a^2x^7 + a^{12}x^6 + a^{10}x^5 + ax^4 + a^9x^3 + a^5x^2 + x +
   a^7)$,\\
   
$g_{17}(x,y)=(a^{10}x^{25} + a^{10}x^{24} + a^2x^{23} + ax^{21} + a^2x^{20} + a^9x^{19} + a^{13}x^{18} +
   a^3x^{17} + a^{11}x^{16} + x^{15}+ a^{12}x^{14}+ a^7x^{12} + ax^{11} + a^{12}x^{10} +
   a^2x^9 + a^{12}x^7 + a^{10}x^6 + a^5x^5 + a^{13}x^4 + a^{14}x^3 + a^5x^2 +
   a^{12}x + a^{10},a^4x^{27} + a^4x^{26} + a^3x^{25} + a^{12}x^{24} +
   a^3x^{23} + a^3x^{22} + a^8x^{21} + a^3x^{20} + a^{13}x^{19} + a^3x^{18} + a^{11}x^{17}
   + a^{10}x^{16} + a^{11}x^{15}+ a^6x^{14}+ ax^{13} + a^{14}x^{12} + a^3x^{11} +
   a^5x^{10} + a^{10}x^9 + a^7x^8 + ax^6 + a^2x^5 + a^2x^4 + a^3x^3 +
   a^6x^2 + x + a)$,\\
   
$g_{18}(x,y)=(a^{10}x^{25} + a^5x^{24} + a^{12}x^{23} + a^{14}x^{22} + a^{12}x^{21} + a^{14}x^{20} + x^{19} +
   a^{11}x^{18} + a^3x^{17} + a^4x^{16} + ax^{15}+ a^{11}x^{14}+ a^5x^{13} + a^{14}x^{12}
   + a^3x^{11} + a^3x^{10} + a^{13}x^9 + x^8 + a^7x^7 + a^4x^6 + a^{13}x^5 +
   a^7x^4 + a^{14}x^3 + x^2 + a^8x + a^{13}, a^4x^{27} + a^9x^{26} +
   x^{25} + x^{24} + a^8x^{23} + a^{14}x^{22} + a^3x^{21} + a^3x^{20} + x^{19} + x^{18} +
   a^{13}x^{16} + a^4x^{15}+ a^{10}x^{14}+ a^{11}x^{13} + a^9x^{12} + a^{12}x^{11} +
   a^{11}x^{10} + a^{12}x^9 + x^8 + a^4x^7 + a^9x^6 + x^5 + x^4 + a^8x^3 +
   a^4x^2 + a^{13}x + a^4)$,\\
   
$g_{19}(x,y)=(a^5x^{25} + a^{13}x^{24} + a^5x^{23} + a^2x^{22} + ax^{21} + ax^{20} + ax^{19} +
   a^{11}x^{18} + a^8x^{17} + a^4x^{15}+ a^3x^{14}+ a^2x^{13} + a^4x^{12} + a^{12}x^{11}
   + x^{10} + a^6x^9 + a^8x^8 + a^{12}x^7 + a^3x^6 + a^7x^5 + a^7x^4 +
   a^{11}x^3 + a^7x^2 + a^{12}x + a^{13}, a^9x^{27} + a^4x^{26} + a^{12}x^{25} +
   a^6x^{24} + a^2x^{23} + a^2x^{22} + a^3x^{21} + a^2x^{20} + a^3x^{19} + a^{12}x^{18}
   + a^9x^{17} + a^3x^{15}+ a^7x^{14}+ a^{13}x^{13} + a^{13}x^{12} + a^{10}x^{11} +
   a^5x^{10} + a^8x^9 + ax^8 + a^7x^7 + a^{12}x^6 + a^2x^5 + a^6x^4 +
   a^9x^3 + a^{12}x^2 + a^6x + a^4)$,\\
   
$g_{20}(x,y)=(a^{13}x^{25} + a^3x^{24} + a^{12}x^{23} + a^5x^{22} + a^5x^{21} + a^7x^{20} + a^6x^{19} +
   a^2x^{18} + x^{17} + a^{13}x^{16} + a^{13}x^{15}+ a^3x^{14}+ a^5x^{13} + a^7x^{12} +
   a^{10}x^{11} + a^{11}x^{10} + a^4x^8 + a^9x^7 + a^4x^6 + a^7x^5 + a^4x^2 +
   a^4x + a^8, a^4x^{28} + a^4x^{27} + x^{26} + a^{11}x^{24} + x^{23} + a^9x^{22}
   + a^2x^{21} + a^8x^{19} + x^{18} + a^5x^{17} + x^{16} + a^8x^{14}+ a^7x^{13} +
   a^{14}x^{12} + a^6x^{11} + a^{12}x^{10} + a^{13}x^9 + a^6x^8 + a^3x^7 + a^3x^6 +
   a^2x^5 + a^8x^3 + ax^2 + a^3x + a^{14})$,\\
   
$g_{21}(x,y)=(a^8x^{25} + a^{13}x^{24} + a^{13}x^{23} + a^2x^{22} + a^{11}x^{21} + ax^{20} + a^5x^{19} +
   a^3x^{17} + a^{13}x^{16} + x^{15}+ ax^{14}+ a^2x^{13} + a^3x^{12} + ax^{11} +
   a^{10}x^{10} + x^9 + a^3x^8 + x^7 + a^{13}x^6 + a^2x^5 + a^{12}x^4 + x^3 +
   a^9x^2 + ax + a^4, a^2x^{27} + a^7x^{26} + a^{10}x^{25} + a^{14}x^{24} +
   a^9x^{23} + a^{14}x^{22} + a^6x^{21} + ax^{20} + a^5x^{19} + x^{18} + a^4x^{17} +
   a^{14}x^{16} + a^9x^{15}+ a^8x^{14}+ a^8x^{13} + a^{11}x^{12} + a^5x^{11} + a^7x^{10}
   + a^{12}x^8 + a^2x^7 + a^9x^6 + a^2x^5 + a^8x^4 + a^{14}x^3 + a^2x^2 +
   a^{13}x + a^{13})$,\\
   
$g_{22}(x,y)=(a^{13}x^{25} + a^2x^{24} + a^4x^{23} + a^3x^{22} + a^2x^{21} + ax^{20} + a^4x^{19} +
   a^8x^{17} + a^3x^{16} + a^6x^{15}+ a^{11}x^{14}+ a^5x^{13} + a^2x^{12} + ax^{11} +
   ax^{10} + a^5x^9 + a^7x^8 + a^{11}x^7 + a^3x^6 + a^{12}x^5 + a^3x^4 +
   a^{13}x^3 + a^3x^2 + a^9x + a^{11}, a^7x^{27} + a^{11}x^{26} + a^4x^{25} +
   a^8x^{24} + a^{10}x^{23} + a^{12}x^{22} + ax^{21} + a^9x^{20} + a^{11}x^{19} + a^2x^{18}
   + a^6x^{17} + a^{11}x^{16} + ax^{15}+ a^{14}x^{14}+ a^7x^{13} + a^{12}x^{12} +
   a^2x^{11} + a^{12}x^{10} + a^{11}x^9 + a^{14}x^8 + a^{10}x^7 + a^{11}x^6 + a^{14}x^5
   + a^5x^4 + a^7x^3 + a^2x^2 + a^9x + a^2)$,\\
   
$g_{23}(x,y)=(a^2x^{25} + a^{14}x^{23} + a^4x^{22} + a^8x^{21} + a^3x^{20} + a^9x^{19} + a^{13}x^{17} +
   a^{12}x^{16} + a^6x^{15}+ a^{12}x^{14}+ a^5x^{13} + a^8x^{12} + a^{12}x^{11} +
   a^6x^{10} + a^{14}x^9 + a^{10}x^8 + ax^7 + x^6 + a^3x^5 + a^{11}x^4 + a^{14}x^3
   + a^4x^2 + a^{13}x + a, a^{11}x^{27} + a^6x^{25} + a^4x^{24} + a^{13}x^{23} +
   a^3x^{22} + a^9x^{21} + a^{13}x^{18} + a^4x^{17} + a^{14}x^{16} + a^3x^{15}+ a^8x^{14}
   + ax^{13} + a^{10}x^{12} + a^7x^{11} + a^9x^{10} + a^{12}x^9 + a^{12}x^8 + a^5x^7 +
   a^3x^5 + a^{12}x^4 + a^{14}x^2 + a^8x + a^7)$,\\
   
$g_{24}(x,y)=(a^6x^{24} + a^6x^{22} + x^{21} + a^2x^{20} + a^{13}x^{19} + ax^{18} + a^{14}x^{16} +
   a^{14}x^{15}+ a^{11}x^{14}+ a^4x^{13} + a^5x^{12} + a^3x^{11} + a^2x^{10} + x^9 +
   a^9x^8 + a^6x^7 + a^{14}x^6 + a^{11}x^5 + a^6x^4 + a^8x^2 + a^{10}x +
   a^5, a^{14}x^{26} + a^7x^{25} + x^{24} + a^{11}x^{23} + a^{10}x^{22} + x^{20} +
   x^{19} + a^{13}x^{17} + a^5x^{16} + a^5x^{15}+ a^4x^{14}+ a^4x^{13} + a^{11}x^{12} +
   a^5x^{11} + a^4x^{10} + a^8x^9 + a^{13}x^8 + a^7x^7 + a^{11}x^5 + a^6x^4 +
   a^8x^3 + a^3x + a^{11})$,\\
   
$g_{25}(x,y)=[(a^6x^{25} + a^6x^{23} + x^{22} + a^2x^{21} + a^{13}x^{20} + ax^{19} + a^{14}x^{17} +
   a^{14}x^{16} + a^{11}x^{15}+ a^4x^{14}+ a^5x^{13} + a^3x^{12} + a^2x^{11} + x^{10} +
   a^9x^9 + a^6x^8 + a^{14}x^7 + a^{11}x^6 + a^6x^5 + a^8x^3 + a^{10}x^2 +
   a^5x, a^{14}x^{27} + a^7x^{26} + x^{25} + a^{11}x^{24} + a^{10}x^{23} + x^{21} +
   x^{20} + a^{13}x^{18} + a^5x^{17} + a^5x^{16} + a^4x^{15}+ a^4x^{14}+ a^{11}x^{13} +
   a^5x^{12} + a^4x^{11} + a^8x^{10} + a^{13}x^9 + a^7x^8 + a^{11}x^6 + a^6x^5 +
   a^8x^4 + a^3x^2 + a^{11}x)$,\\
   
$g_{26}(x,y)=(a^4x^{24} + a^2x^{23} + a^6x^{22} + a^8x^{21} + a^4x^{20} + a^2x^{19} + a^7x^{18} +
   a^7x^{17} + x^{16} + a^{14}x^{15}+ a^{13}x^{14}+ ax^{13} + a^6x^{12} + a^{13}x^{11} +
   a^{11}x^{10} + a^5x^9 + ax^8 + a^9x^7 + a^{12}x^6 + a^9x^4 + ax^3 +
   a^{13}x^2 + a^{12}x + a^9, a^3x^{28} + a^7x^{27} + a^{11}x^{26} + a^8x^{25} +
   a^9x^{23} + a^8x^{22} + x^{21} + a^4x^{20} + a^9x^{19} + a^4x^{18} + a^9x^{17} +
   ax^{16} + a^5x^{15}+ a^{13}x^{14}+ a^9x^{13} + a^6x^{12} + ax^{11} + a^9x^{10} +
   a^5x^9 + a^{11}x^8 + a^{10}x^6 + a^5x^5 + a^{10}x^4 + a^{12}x^3 + x^2 + a^8x
   + 1)$,\\
   
$g_{27}(x,y)=(a^2x^{25} + x^{24} + x^{23} + a^{14}x^{22} + a^6x^{21} + ax^{20} + a^{12}x^{19} + a^{14}x^{18}
   + a^{10}x^{17} + a^{13}x^{16} + a^5x^{14}+ a^5x^{13} + a^{14}x^{12} + a^9x^{11} +
   a^6x^{10} + a^{13}x^9 + a^{10}x^8 + a^8x^7 + a^3x^6 + a^{12}x^5 + a^7x^4 +
   x^3 + a^{10}x + a^8, a^6x^{28} + x^{27} + x^{26} + a^4x^{25} + a^6x^{24} +
   a^4x^{23} + a^{11}x^{21} + a^5x^{20} + a^{11}x^{19} + a^9x^{18} + a^8x^{17} + ax^{16} +
   a^{13}x^{15}+ a^4x^{14}+ x^{13} + a^3x^{11} + a^2x^{10} + a^4x^9 + a^5x^8 + x^7
   + a^3x^6 + a^{12}x^5 + a^{11}x^4 + a^6x^3 + a^{14}x^2 + a^{13}x + a^{14})$.\\

\appendix

\section{Magma implementation of the multiplication algorithm in the finite field 
$\F_{16^{13}}$ over the finite field $\F_{16}$}
{\small
\begin{verbatim}
 // Construction of the function field
 n:=13; g:=2; q:=16; F16<a>:=GF(16);
 Kx<x> := FunctionField(F16); 
 R<x>:=PolynomialRing(F16);
 Kxy<y> := PolynomialRing(Kx);
 f:=y^2 + y + x^5;
 F<c> := FunctionField(f);
 // Construction of the place Q and divisor D
 q := x^13+a^6*x^12+a^5*x^11+a^11*x^10+x^9+a^12*x^8+a^7*x^7+
 a^7*x^5+a^2*x^4+a^11*x^3+a^8*x^2+a^6*x+a^14;
 DD:=x^14+a^9*x^13+a^6*x^12+a^7*x^11+a^11*x^10+a^12*x^9+ 
 a^10*x^8+a^6*x^7+a^7*x^6+a^10*x^5+a^14*x^4+x^3+x^2+a^3*x+a; 
 P:=Decomposition(F,Zeros(Kx!q)[1]); Q:=P[1];
 D:=Decomposition(F,Zeros(Kx!DD)[1])[1]; D:=1D;
 IsSpecial(1D); // false
 Dimension(Q-D); //  0
 IsSpecial(D-Q); // false
 // Construction of the residue class field 
 // and the degree one places
 K<b>:=ResidueClassField(Q);                                                                                         
 LP:=Places(F,1);  

 // Construction of the Riemann space
 LD, h :=RiemannRochSpace(D);
 L2D, h2 :=RiemannRochSpace(2D);
 BaseLD:=[(h(v))@@h2 : v in Basis(LD)];
 Base:=ExtendBasis(BaseLD,L2D);
 L2D:=[]; 
 for i in[1..2*n+g-1] do 
   L2D:=Append(L2D, h2(Base[i])); 
 end for;
 ML2D:=Matrix(2*n+g-1,1,L2D);
 // Construction of E : E=Evalf(Q)
 L:=[]; 
 for i in [1..n] do 
   L:=Append(L,ElementToSequence(Evaluate(L2D[i],Q)));
 end for;
 E:=Transpose(Matrix(L));
 // we use a normal basis 
 PP:=[]; 
 for i:=0 to n-1 do 
   PP:=Append(PP, ElementToSequence(b^(q^i))); 
 end for;
 NC:=Transpose(Matrix(F16,n,n,PP)); NCI:=NC^-1;
 //  X.M=fx   
 M:=Matrix(F,E^-1*NC);
 Ev:=Matrix(1,n,[L2D[i] : i in [1..n]]);
 // Construction of L(2D) with the required properties
 EL2D:=Matrix(2*n+g-1,1,
       [Evaluate(L2D[i],Q) : i in [1..2*n+g-1]]);
 BEL2D:=Matrix(F16,2*n+g-1,n,
        [ElementToSequence(EL2D[i][1]) : i in [1..2*n+g-1]]);
 MM:=Parent(ZeroMatrix(F,n+g-1,2n+g-1))!
           Matrix(Basis(NullSpace(BEL2D)))*ML2D;   
 for i in [1..n] do 
   L2D[i]:=Transpose(EvM)[i,1]; 
 end for; 
 // rows of ML2D form a basis of L(2D)                  
 ML2D:=Matrix(2*n+g-1,1,
       [L2D[i] : i in [1..n]] cat [MM[i,1] : i in [1..n+g-1]]);
 // we can check that evaluation in 
 // Q of the last n+g-1 gives 0 :
 // [Evaluate(ML2D[i,1],Q) : i in [1..2*n+g-1]];

 // Construction of T and T^-1
 ST:=[]; 
 for j:=1 to 2*n+g-1 do 
   for i:=1 to 2*n+g-1 do 
     ST:=Append(ST, Evaluate(ML2D[i,1], LP[j]));
   end for; 
 end for;
 T:=Matrix(2*n+g-1, ST); 
 TI:=T^-1;
 // Construction of P
 P:=VerticalJoin(HorizontalJoin(ScalarMatrix(F16,n,1),
                         ZeroMatrix(F16,n,n+g-1)),
 ZeroMatrix(F16,n+g-1,2*n+g-1));
 // Matrix T1
 T1:=T*P*TI;
 // X and Y are the elements to multiply
 // represented in a normal basis
 X:=Matrix(F16,13,1,[a,1,0,0,0,0,0,0,0,0,0,0,0]); // example
 Y:=Matrix(F16,13,1,[1,a,a,0,0,0,0,0,0,0,0,0,0]); // example
 fx:=VerticalJoin(X,ZeroMatrix(F16,n+g-1,1));
 fy:=VerticalJoin(Y,ZeroMatrix(F16,n+g-1,1));
 //   u = T(fx)T(fy) 
 u:= Matrix(2*n+g-1,1,
          [(Tfx)[i][1]*(Tfy)[i][1] : i in [1..2*n+g-1]]);
 //  fz = MM(P*TI*u) 
 fz:=Matrix(n,1,[(P*TI*u)[i][1] : i in [1..n]]);
 // fz gives X*Y in the normal basis
\end{verbatim}
}


\end{document}